\documentclass[10pt]{article}

\usepackage[T1]{fontenc}
\usepackage[utf8]{inputenc}
\usepackage{amsmath,amssymb,amsthm}
\usepackage{xcolor}
\usepackage{booktabs}
\usepackage{graphicx}
\usepackage{array} 
\usepackage{bm}
\usepackage[a4paper,left=0.85in,right=0.85in,top=0.95in,bottom=0.95in]{geometry}
\usepackage[expansion=false]{microtype}
\usepackage{titlesec}
\titlespacing*{\section}{0pt}{11pt plus 3pt minus 3pt}{5pt plus 1pt}
\titlespacing*{\subsection}{0pt}{7pt plus 2pt minus 2pt}{3pt plus 1pt}
\usepackage[hidelinks]{hyperref}
\hypersetup{pdftitle={WSVI: A Dimensionless Shape Family for Implied Volatility and its Static No-Arbitrage Structure},pdfauthor={Charles Clevenger, Xiang Wan}}

\usepackage{xcolor}
\definecolor{labelkey}{rgb}{0,0,1}
\definecolor{Red}{rgb}{0.7,0,0.1}
\definecolor{Green}{rgb}{0,0.7,0}

\usepackage{tocloft}
\newcommand{\R}{\mathbb{R}}
\newcommand{\sca}{\varsigma}
\newcommand{\dens}{\varrho}
\newcommand{\Ndist}{\mathcal{N}}
\newcommand{\ndens}{\mathfrak{n}}
\newcommand{\Cdf}{\mathcal{C}}
\newcommand{\Ray}{\mathcal{R}} \newcommand{\bpi}{\bm{\pi}}

\theoremstyle{plain}
\newtheorem{theorem}{Theorem}[section]
\newtheorem{proposition}[theorem]{Proposition}

\newtheorem{corollary}[theorem]{Corollary}

\theoremstyle{definition}
\newtheorem{definition}[theorem]{Definition}

\theoremstyle{remark}
\newtheorem{remark}[theorem]{Remark}

\AtBeginDocument{%
  \setlength{\abovedisplayskip}{9pt plus 2pt minus 3pt}%
  \setlength{\belowdisplayskip}{9pt plus 2pt minus 3pt}%
  \setlength{\abovedisplayshortskip}{3pt plus 2pt minus 2pt}%
  \setlength{\belowdisplayshortskip}{3pt plus 2pt minus 2pt}%
  \setlength{\topsep}{4pt plus 1pt minus 2pt}%
}

\allowdisplaybreaks
\numberwithin{equation}{section}

\title{\Large
    WSVI: A Dimensionless Shape Family for Implied\\[4pt]
    Volatility and Its Static No-Arbitrage Structure
}

\author{Charles Clevenger
             \thanks{Department of Mathematics and Statistics, Loyola University Chicago, Chicago, IL \hfill \texttt{cclevenger1@luc.edu}}
        \and
        Xiang Wan
             \thanks{Department of Mathematics and Statistics, Loyola University Chicago, Chicago, IL \hfill \texttt{xwan1@luc.edu}}
}

\date{\today}

\begin{document}

\maketitle

\begin{abstract}
W-shaped smiles appear in near-expiry options around binary events such as earnings, and have been associated with bimodal risk-neutral densities. The three-parameter eSSVI slice cannot produce them. This paper defines WSVI, a parametric family for implied volatility that admits negative at-the-forward curvature and bimodal implied densities, and develops its static no-arbitrage structure. The construction factorizes total variance into a level and a dimensionless shape of normalized log-moneyness. The shape extends the per-slice eSSVI form with bounded one-sided basis terms, which add flexibility in the interior while leaving the leading-order wing behavior controlled by the affine and quadratic components. We characterize the family's exact domain and write the butterfly, vertical spread, and calendar conditions directly in shape coordinates.
\end{abstract}

\vspace{0.8em}

{\scriptsize
  \tableofcontents
}

\section{Introduction}\label{sec:intro}
At each expiry, an options market quotes prices across a range of strikes. Converted to Black implied volatilities, these quotes form a smile whose shape reflects the market-implied distribution of the underlying at that expiry \cite{BreedenLitzenberger1978}. Models in the SVI family provide low-dimensional parametric representations of the implied-volatility smile at a single expiry. Their closed forms permit analytic differentiation and controlled extrapolation beyond the listed strikes. Their parameterizations also allow static no-arbitrage conditions to be studied analytically \cite{GatheralJacquier2014}.

Several extensions and reformulations of SVI have been developed, including SVI5, SVI-JW, and SSVI \cite{GatheralJacquier2014}, as well as extended-SSVI (eSSVI) \cite{HendriksMartini2019}. Here, the relevant baseline is the per-slice eSSVI form, which WSVI contains as its three-parameter base member. The limitation relevant here is structural. In the three-parameter eSSVI slice, the same parameters governing the core shape also determine the asymptotic behavior of the wings.

We seek a family with three properties: (i) enough flexibility to represent shapes unavailable to the three-parameter eSSVI slice, (ii) a parameterization in which static no-arbitrage conditions remain analyzable, and (iii) controlled behavior beyond the liquid strike region. These goals compete with one another since additional parameters increase flexibility while introducing new directions in which arbitrage and wing behavior must be controlled.

The SVI family and its surface form SSVI \cite{GatheralJacquier2014} are standard constructions, with eSSVI \cite{HendriksMartini2019} providing a further extension. Their no-arbitrage domains have been studied in detail \cite{Klassen2016,MartiniMingone2022,Mingone2022}. The per-slice eSSVI form carries total variance $\vartheta$, correlation $\rho_*\in(-1,1)$, and curvature $\phi>0$. It satisfies the second and third requirements well, but it fails the first in one structural respect. As Corollary~\ref{cor:convexity} shows, its shape is strictly convex at every strike for every admissible parameter choice. This prevents it from representing W-shaped smiles of the kind observed around binary events such as earnings releases, as studied by Glasserman and Pirjol \cite{GlassermanPirjol2023}. At zero skew, a local maximum of total variance at the forward corresponds to negative at-the-forward curvature. Two routes to such shapes are available. A mixture model, or the parameter-randomization framework of Zaugg, Perotti, and Grzelak \cite{ZauggPerottiGrzelak2026}, reaches them with a density that is non-negative by construction, at the cost of working in price space. The induced implied volatility has no closed form and is recovered there through a Taylor expansion in log-moneyness, whose radius of convergence is not known a priori.

Instead, the route taken here is to widen the parametric family in volatility space, permitting implicit bimodal risk-neutral densities and W-shaped expressions while retaining a closed form valid at every strike, with wing asymptotics fixed by two parameters and analytic derivatives at the cost of arbitrage conditions that must be characterized and enforced rather than inherited. WSVI adds bounded shape terms while retaining eSSVI as its $m=0$ base member. 
We next rigorously define this family.

\medskip
\noindent\textbf{The family and our main results.} This paper defines and studies the
following family, which we call \emph{WSVI}. Fix an expiry $T$ and a
forward $F$, and write $k=\ln(K/F)$ for log-moneyness and
$w(k)=\sigma(k)^2T$ for total implied variance. Let $\sigma_0>0$,
$s\in\R$, $c\ge0$, and $a=(a_1,\dots,a_m)\in\R^m$ be parameters. With the
\emph{scale} $\sca:=\sigma_0\sqrt{T}$ and the \emph{normalized strike}
$z:=k/\sca$, WSVI postulates
\begin{equation}\label{eq:wsvi}
  w(k)\;=\;\sca^2 f(z),
  \qquad
  f(z)\;=\;\frac{P(z)}{2}+\sqrt{\frac{P(z)^2}{4}+\frac{c\,z^2}{2}},
  \qquad
  P(z)\;=\;1+s\,z+\sum_{j=1}^{m}a_j\,\varphi_j\!\left(\frac{z}{\lambda_j}\right),
\end{equation} 
where $\varphi_1,\dots,\varphi_m$ are the smooth, bounded, one-sided basis members of Definition~\ref{def:onesided}, each attached to a scale $\lambda_j>0$, an order $n_j\ge1$, and a side $\varepsilon_j\in\{-1,+1\}$, and each vanishing at $z=0$. The dimensionless function $f$ is the \emph{shape}, normalized by $f(0)=1$; $\sigma_0$ is the level and is exactly the at-the-forward implied volatility. At $m=0$ the shape reduces to the eSSVI slice (Remark~\ref{rem:ssvi}), so WSVI contains that family and extends it.

The main results of this paper establish three exact properties of the family \eqref{eq:wsvi}. First, the leading-order wing behavior of the shape is determined by $(s,c)$ alone, and no choice of amplitudes can alter it (Proposition~\ref{prop:asymconst}). This follows from the boundedness of every $\varphi_j$: the added freedom reshapes the interior without changing the leading-order wings. Second, the family is exactly nested. Setting an amplitude to zero reduces an $m$-amplitude member to the corresponding $(m-1)$-amplitude member, so the amplitude count defines a genuine model ladder (Proposition~\ref{prop:nesting}). Third, the real-valued domain is characterized exactly by $c\ge0$ for every amplitude count (Theorem~\ref{thm:domain}).

We would like to emphasize that the third property above does not cost the family its motivating flexibility, because of a central distinction from eSSVI: $c$ is \emph{not} the at-the-forward curvature of the shape. By Proposition~\ref{prop:at-the-forward}, 
\[ 
    f''(0)=c+\sum_{j:n_j=1}\frac{a_j}{\lambda_j^2}, 
\] 
so $f''(0)<0$ remains attainable with $c>0$ through negative order-one amplitudes (Remark~\ref{rem:cnotcurv}). Corollary~\ref{cor:at-the-forwardbudget} gives the lower bound on $f''(0)$ imposed by the butterfly condition in terms of skew and level. Empirical assessment of how often market smiles require this additional flexibility is outside the scope of this paper. What is established here is that the family can express shapes that the canonical three-parameter slice cannot.

The rest of the paper is organized as follows. Sections~\ref{sec:curve} and~\ref{sec:asym} develop the family, its exact derivatives, its at-the-forward identities, its correspondence with eSSVI, and its domain. Section~\ref{sec:arb} gives the butterfly, spread, and calendar conditions in shape coordinates. The butterfly condition is level-dependent, while Theorem~\ref{thm:levelsens} identifies the level-free ray factor governing sensitivity to the total-variance level. Theorem~\ref{thm:caldecomp} separates calendar variation into level and shape-drift contributions, Theorem~\ref{thm:sharedshape} provides an exact calendar-arbitrage-free construction for expiries sharing one shape, and Theorem~\ref{thm:calbudget} bounds the shape-drift contribution when the shape is not shared.
\newpage

\noindent \textbf{Conventions and state variables.}
We summarize the symbols used throughout this paper below:

{\footnotesize
\renewcommand{\arraystretch}{0.95}
\begin{center}
\begin{tabular}{@{}l@{\hspace{0.6em}}p{5.6cm}@{\hspace{1.4em}}l@{\hspace{0.6em}}p{5.0cm}@{}}
\toprule
Symbol & Meaning & Symbol & Meaning\\
\midrule
$k$ & $\ln(K/F)$, log-moneyness (Def.~\ref{def:logmoneyness}) & $\chi$ & $s^2/4+c/2$, radicand coefficient\\
$w,w',w''$ & total implied variance, $k$-derivatives & $C_{\pm}$ & asymptotic wing slopes \eqref{eq:wingslopes}\\
$\sigma_0$ & at-the-forward volatility, a parameter & $R$ & radicand $A^2+Q$; the domain is $R\ge0$\\
$\sca$ & $\sigma_0\sqrt{T}$, the scale \eqref{eq:scale} & $\delta_c$ & strictly positive floor on $c$ (Cor.~\ref{cor:admissible})\\
$z$ & $k/\sca$, normalized strike \eqref{eq:normstrike} & $g$ & density factor \eqref{eq:gdef}; $g\ge0$ is butterfly\\
$\theta$ & $w(0)=\sca^2$, at-the-forward total variance & $\psi$ & density weight \eqref{eq:psidef}\\
$f,f',f''$ & shape function \eqref{eq:fdef}, $z$-derivatives & $\dens$ & implied density of $k$ \eqref{eq:density}\\
$P,Q,A,B$ & shape constituents, $f=A+B$ & $\Cdf$ & implied survival fn.\ \eqref{eq:cdfdef}; $0\le\Cdf\le1$\\
$s,c$ & dimensionless skew $s=f'(0)$; radicand parameter $c$ & $\Ray$ & ray factor $1-kw'/(2w)$ \eqref{eq:raydef}\\
$a_j,\lambda_j,n_j,\varepsilon_j$ & amplitude, scale, order, side of member $j$ & $v$ & local variance \eqref{eq:localvar}\\
$m,M$ & amplitudes in use; maximum available & $\bpi$ & shape parameter vector $(s,c,a)$\\
$\varphi_n^{\varepsilon}$ & one-sided basis member \eqref{eq:phidef} & $\vartheta,\rho_*,\phi,\Psi$ & eSSVI coordinates, $\Psi=\vartheta\phi$\\
\bottomrule
\end{tabular}
\end{center}
}

The remainder of this section fixes the coordinates in which the family is written. Fix one underlying and one expiry. The following are taken as given. Their construction from listed quotes, including the recovery of $F$ from put--call parity and the removal of the early-exercise premium on American options, is a separate problem, and the family below is agnostic to the method used.

\begin{itemize}\itemsep2pt
  \item $T\in\R_{>0}$, the time to expiry in years, on whichever clock the application adopts.
  \item $F\in\R_{>0}$, the forward of the underlying to $T$.
  \item $K\in\R_{>0}$, a strike.
  \item $r\in\R$, a continuously compounded discount rate to $T$.
\end{itemize}

\begin{definition}[Log-moneyness]\label{def:logmoneyness}
The \emph{log-moneyness} of a strike $K$ relative to the forward $F$ is
\[
    k \;:=\; \ln(K/F)\;\in\;\R .
\]
\end{definition}

Strikes are measured relative to the forward rather than in absolute terms, so that slices of different underlyings and expiries are expressed on a common coordinate. The map $K\mapsto k$ is a bijection $\R_{>0}\to\R$; $k=0$ corresponds to $K=F$, and $k$ is the only strike coordinate used below.

\begin{definition}[Total implied variance]\label{def:totvar}
Let $\sigma(k)>0$ denote the Black implied volatility at log-moneyness $k$. The \emph{total implied variance} is $w(k):=\sigma(k)^2\,T>0$.
\end{definition}

We note that derivatives with respect to $k$ are denoted as $w'=\mathrm{d}w/\mathrm{d}k$ and $w''=\mathrm{d}^2w/\mathrm{d}k^2$. 
Next, total variance rather than volatility is the modeled quantity throughout. The static no-arbitrage conditions of Section~\ref{sec:arb} are expressed in terms of $w$, $w'$, and $w''$, while the calendar condition compares total variance across expiries. Expressing these conditions directly in volatility would introduce additional maturity factors.

\begin{definition}[Black price]\label{def:black} 
    For a call with strike $K$ and total variance $w$, 
    \[ 
        V_{\mathrm{c}}(K)\;=\;e^{-rT}\Big[F\,\Ndist(d_1)-K\,\Ndist(d_2)\Big], \qquad d_{1,2}\;=\;-\frac{k}{\sqrt{w}}\pm\frac{\sqrt{w}}{2}, 
    \] 
    where $\Ndist$ denotes the standard normal distribution function and $\ndens(x):=(2\pi)^{-1/2}e^{-x^2/2}$ its density. Puts follow from put--call parity, $V_{\mathrm{c}}-V_{\mathrm{p}}=e^{-rT}(F-K)$. 
\end{definition}
$d_1$ and $d_2$ depend on $(K,F,w)$ only through $k$ and $w$. The construction is \emph{forward-relative}. Once the relevant forward level $F$ is specified, its method of construction does not otherwise enter the parameterization. When $F$ is a quoted futures price, Definition~\ref{def:black} reduces to the Black (1976) formulation. The cross-expiry theory of Section~\ref{sec:calendar} assumes that all expiries reference the same underlying. 
Two positivity requirements bound the family's scope. The log-moneyness coordinate requires $K>0$ and $F>0$. If the relevant forward can be non-positive, or if the market uses an additive rather than a multiplicative volatility convention, this coordinate is not available and the family does not apply without reformulation.

\section{The parametric family}\label{sec:curve}

This section defines the WSVI parameterization, its basis functions, and the derivatives used throughout the paper. The closed-form derivatives of Section~\ref{sec:exactderiv} are what allow the arbitrage conditions of Section~\ref{sec:arb} to be evaluated exactly rather than by finite differencing, and the at-the-forward identities of Proposition~\ref{prop:at-the-forward} are what give the parameters their interpretation.

\subsection{Level--shape factorization}\label{sec:levelshape}

\begin{definition}[Scale, normalized strike, shape]\label{def:scale}
Let $\sigma_0>0$ be a parameter, the \emph{at-the-forward volatility}.
Define the \emph{scale}
\begin{equation}\label{eq:scale}
  \sca \;:=\; \sigma_0\sqrt{T} \;>\; 0,
\end{equation}
the \emph{normalized strike}
\begin{equation}\label{eq:normstrike}
  z \;:=\; \frac{k}{\sca}\;\in\;\R,
\end{equation}
and postulate that total variance factorizes as
\begin{equation}\label{eq:factorization}
  {\;w(k)\;=\;\sca^2 f(z),\qquad z=k/\sca,\;}
\end{equation}
where $f$ is a dimensionless \emph{shape function} normalized by
\begin{equation}\label{eq:norm}
  f(0)\;=\;1 .
\end{equation}
\end{definition}
Two comments on Definition~\ref{def:scale}. First, the codomain of $f$ is left unspecified: Section~\ref{sec:domain} identifies the parameter region on which the representation is real and strictly positive, and shows that within that region $f:\R\to\R_{>0}$. Second, \eqref{eq:norm} is a normalization rather than an additional constraint, since Proposition~\ref{prop:at-the-forward} shows that the parameterization satisfies it identically. Consequently, $\sigma_0$ alone carries the level.

Substituting \eqref{eq:normstrike} into \eqref{eq:factorization} gives the identity used throughout. Since $z=k/\sca$ and $\mathrm{d}z/\mathrm{d}k=1/\sca$,
\begin{equation}\label{eq:derivid} 
    w \;=\; \sca^2 f(z),\qquad w' \;=\; \sca\, f'(z),\qquad w'' \;=\; f''(z). 
\end{equation} 
Thus, the required $k$-derivatives follow directly from the shape derivatives. The first derivative carries one factor of $\sca$, while the second carries none.
\begin{definition}[At-the-forward total variance]\label{def:theta}
The \emph{at-the-forward total variance} is the value of $w$ at $k=0$,
\[
    \theta \;:=\; w(0)=\sca^2 f(0)=\sca^2=\sigma_0^2 T,
\]
where the last two equalities use the normalization \eqref{eq:norm}.
\end{definition}

Thus $\theta$, $\sca$, and $\sigma_0$ describe the same degree of freedom in different coordinates: total variance, width in $k$-space, and at-the-forward volatility, respectively.

\subsection{The shape function}\label{sec:shapefn} 
\begin{definition}[Shape function]\label{def:shape} Let $s\in\R$ and $c\in\R$ be parameters (Theorem~\ref{thm:domain} restricts $c$ to $c\ge0$; the wider range is kept here so that the restriction is proved rather than assumed), let $m\in\{0,1,\dots,M\}$ be an \emph{amplitude count}, and let $a:=(a_1,\dots,a_m)\in\R^m$ denote the amplitudes. Let $\varphi_1,\dots,\varphi_m:\R\to\R$ be the basis functions of Definition~\ref{def:onesided}, with attached scales $\lambda_j>0$.
Define
\begin{align}
  P(z) &\;:=\; 1 \;+\; s\,z \;+\; \sum_{j=1}^{m} a_j\,\varphi_j\!\left(\frac{z}{\lambda_j}\right),
  \label{eq:Pdef}\\[2pt]
  Q(z) &\;:=\; \tfrac12\,c\,z^2, \label{eq:Qdef}\\[2pt]
  A(z) &\;:=\; \tfrac12 P(z),\qquad B(z)\;:=\;\sqrt{A(z)^2+Q(z)}, \label{eq:ABdef}\\[2pt]
  f(z) &\;:=\; A(z)+B(z)\;=\;\frac{P(z)}{2}+\sqrt{\frac{P(z)^2}{4}+\frac{c\,z^2}{2}}.
  \label{eq:fdef}
\end{align}
The parameter vector is
\begin{equation}\label{eq:pvec}
  p \;:=\; \big(\sigma_0,\,s,\,c,\,a_1,\dots,a_m\big)\in\R_{>0}\times\R^{2+m},
  \qquad \dim p \;=\; 3+m .
\end{equation}
\end{definition}
\begin{remark}[Why a square root rather than a polynomial]\label{rem:root}
The square root in \eqref{eq:fdef} is what makes $f$ asymptotically
\emph{linear} in $|z|$ (Section~\ref{sec:asym}), while remaining smooth and
positive near $z=0$. A polynomial in $z$ of degree $\ge2$ grows too fast to
satisfy Lee's large-strike bound (Proposition~\ref{prop:wingallm}); a
piecewise construction would introduce knots into $w''$ and hence into the
implied density of Section~\ref{sec:butterfly}.
\end{remark}

\subsection{The term basis}\label{sec:basis}

\begin{definition}[Basis primitives]\label{def:primitives}
For $u\in\R$ and an integer \emph{order} $n\ge1$, define
\begin{equation}\label{eq:et}
  e(u)\;:=\;\frac{u^2}{1+u^2},\qquad t(u)\;:=\;\frac{u}{\sqrt{1+u^2}},
\end{equation}
and
\begin{equation}\label{eq:EO}
  E_n(u)\;:=\;e(u)^n=\left(\frac{u^2}{1+u^2}\right)^{\!n},\qquad
  O_n(u)\;:=\;t(u)^{2n+1}=\left(\frac{u}{\sqrt{1+u^2}}\right)^{\!2n+1}.
\end{equation}
\end{definition}
$E_n$ is even and $O_n$ is odd. Their limits are
\begin{equation}\label{eq:EOlim}
  \lim_{|u|\to\infty}E_n(u)=1,\qquad \lim_{u\to\pm\infty}O_n(u)=\pm1 .
\end{equation}

\begin{definition}[One-sided basis members]\label{def:onesided}
For a side $\varepsilon\in\{-1,+1\}$, with $+1$ denoting the \emph{call}
side and $-1$ the \emph{put} side,
\begin{equation}\label{eq:phidef}
    \varphi_n^{\varepsilon}(u)\;:=\;\tfrac12\Big(E_n(u)\;+\;\varepsilon\,O_n(u)\Big).
\end{equation}
Basis member $j$ of a slice is $\varphi_j:=\varphi_{n_j}^{\varepsilon_j}$,
evaluated at $u=z/\lambda_j$ in \eqref{eq:Pdef}, with side
$\varepsilon_j$, order $n_j\ge1$, and scale $\lambda_j>0$.
\end{definition}
Notice that from \eqref{eq:EOlim},
\begin{equation}\label{eq:philim}
    \varphi_n^{+1}(u)\;\longrightarrow\;
    \begin{cases}1,& u\to+\infty\\ 0,& u\to-\infty\end{cases}
    \qquad \text{and} \qquad
    \varphi_n^{-1}(u)\;\longrightarrow\;
    \begin{cases}0,& u\to+\infty\\ 1,& u\to-\infty\end{cases}
\end{equation}
so that each member is a smooth one-sided step, saturating on the wing it names
and vanishing on the other. Moreover, since $e=t^2$, every member has the
closed form
\begin{equation}\label{eq:phibound}
    \varphi_n^{\varepsilon}(u)\;=\;\tfrac12\,t(u)^{2n}\big(1+\varepsilon\,t(u)\big)\;\in\;[0,1),
\end{equation}
bounded uniformly in $u$.

\begin{proposition}[Every member is asymptotically constant]\label{prop:asymconst}
Each $\varphi_n^{\varepsilon}$ tends to a finite constant as
$u\to\pm\infty$. Consequently, $P$ in \eqref{eq:Pdef} is asymptotically
affine in $z$ with a slope exactly $s$, regardless of the amplitudes.
\end{proposition}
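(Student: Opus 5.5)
The plan is to prove the two claims in order, using only the limits \eqref{eq:EOlim} and the closed form \eqref{eq:phibound}. For the first claim, fix $n\ge1$ and $\varepsilon\in\{-1,+1\}$. We have $t(u)=u/\sqrt{1+u^2}\to\pm1$ as $u\to\pm\infty$, and $\varphi_n^{\varepsilon}=\tfrac12 t^{2n}(1+\varepsilon t)$ is a polynomial in $t$. Continuity of that polynomial then gives $\varphi_n^{\varepsilon}(u)\to\tfrac12(1+\varepsilon)$ as $u\to+\infty$ and $\varphi_n^{\varepsilon}(u)\to\tfrac12(1-\varepsilon)$ as $u\to-\infty$. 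These constants are the values recorded in \eqref{eq:philim}, so each member has finite limits at both ends.

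For the second claim, I will make ``asymptotically affine with slope exactly $s$'' precise as follows. There exist constants $b_\pm$ such that
\[
  P(z)-\big(b_\pm+s\,z\big)\;\longrightarrow\;0 \qquad (z\to\pm\infty).
\]
Consequently $P(z)/z\to s$ and $P(z)-s z$ is bounded. Since $\lambda_j>0$, the argument $u=z/\lambda_j$ tends to $\pm\infty$ with the same sign as $z$. Therefore
\[
  b_\pm\;=\;1+\sum_{j=1}^m a_j\,\tfrac12(1\pm\varepsilon_j).
\]
The difference $P(z)-(b_\pm+sz)$ is a finite sum of terms $a_j\big(\varphi_j(z/\lambda_j)-\tfrac12(1\pm\varepsilon_j)\big)$, and each of these tends to zero.

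The slope statement also follows directly from the uniform bound in \eqref{eq:phibound}. Each basis term satisfies $|a_j\varphi_j|\le|a_j|$, so $|P(z)-1-sz|\le\sum_j|a_j|$ for every $z$. Dividing by $z$ and letting $|z|\to\infty$ gives $P(z)/z\to s$. This limit is independent of $a$, which is the ``regardless of the amplitudes'' part of the statement. The amplitudes enter only the intercepts $b_\pm$, never the slope.

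There is no genuine obstacle here. The only point needing care is to state what ``asymptotically affine'' means, and the convergence to $b_\pm+sz$ given above is the statement I will prove. One should also note that the sign of $\lambda_j$ matters: because $\lambda_j>0$, the wing named by $\varepsilon_j$ is the same in $u$ and in $z$. If $\lambda_j$ could be negative, the intercepts would swap sides, but the slope claim would still hold.
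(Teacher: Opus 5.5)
Your proposal is correct and matches the paper's argument: the limits in \eqref{eq:philim} make the basis sum converge to $\sum_{j:\varepsilon_j=+1}a_j$ as $z\to+\infty$ and to $\sum_{j:\varepsilon_j=-1}a_j$ as $z\to-\infty$, which leaves $sz$ as the only unbounded term. Your explicit intercepts $b_\pm$ and the uniform bound $|P(z)-1-sz|\le\sum_j|a_j|$ spell out what the paper treats as immediate.
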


\begin{proof}
Immediate from \eqref{eq:philim}: the sum in \eqref{eq:Pdef} converges to
$\sum_{j:\,\varepsilon_j=+1}a_j$ as $z\to+\infty$ and to
$\sum_{j:\,\varepsilon_j=-1}a_j$ as $z\to-\infty$, both finite; $sz$ is the only unbounded term.
\end{proof}

Proposition~\ref{prop:asymconst} is the main design property of the basis. The amplitudes can modify the interior of the smile without changing its asymptotic wing behavior, which remains determined by $(s,c)$.

\begin{proposition}[Effect of the order]\label{prop:order}
Raising $n$ at fixed $\lambda$ translates the transition outward and widens it in absolute terms. Specifically, $E_n$ attains half height at
\begin{equation}\label{eq:halfheight}
  u^2 \;=\; \frac{2^{-1/n}}{1-2^{-1/n}},
\end{equation}
which grows linearly in $n$, so the half-height point grows like $\sqrt{n}$.
\end{proposition}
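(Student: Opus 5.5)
The half-height location can be computed directly from the definition $E_n(u)=e(u)^n$. Since $e(u)=u^2/(1+u^2)$ is a strictly increasing bijection from $[0,\infty)$ onto $[0,1)$, the equation $E_n(u)=\tfrac12$ for $u\ge0$ is equivalent to $e(u)=2^{-1/n}$. Solving $u^2/(1+u^2)=q$ with $q:=2^{-1/n}\in(0,1)$ gives $u^2=q/(1-q)$, which is \eqref{eq:halfheight}. The solution is unique in $u^2$, so the half-height points are exactly $u=\pm\sqrt{q/(1-q)}$, by evenness of $E_n$.

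For the growth claim, I will write $1-2^{-1/n}=1-e^{-(\ln 2)/n}$. Using the elementary bounds $x-x^2/2\le 1-e^{-x}\le x$ for $x\ge0$, the right-hand side of \eqref{eq:halfheight} is bounded between
\[
\frac{2^{-1/n}\,n}{\ln 2}\quad\text{and}\quad \frac{2^{-1/n}\,n}{\ln 2\,\bigl(1-\ln 2/(2n)\bigr)}.
\]
Since $2^{-1/n}\in[\tfrac12,1)$, the quantity $u^2$ equals $n/\ln 2+O(1)$ and is therefore linear in $n$. Taking square roots shows the half-height point behaves like $\sqrt{n/\ln 2}$. Monotonicity in $n$ is also immediate: $q=2^{-1/n}$ increases with $n$, and $q\mapsto q/(1-q)$ is increasing on $(0,1)$. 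Hence raising $n$ translates the transition outward.

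For the width claim I need a concrete notion of width. I will use the distance between the points where $E_n$ equals fixed levels, say $\alpha<\beta$ in $(0,1)$, for instance $\tfrac14$ and $\tfrac34$. The same computation gives
\[
u_\gamma^2=\frac{\gamma^{1/n}}{1-\gamma^{1/n}}=\frac{n}{\ln(1/\gamma)}+O(1).
\]
Therefore
\[
u_\beta-u_\alpha\sim\sqrt n\left(\ln(1/\beta)^{-1/2}-\ln(1/\alpha)^{-1/2}\right),
\]
which is positive and grows like $\sqrt n$. This is the absolute widening. I will also note that relative to the location the width stays of order one.

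The only delicate point is this width statement. It depends on the chosen definition, and the $O(1)$ terms need care if monotonicity for every $n$ (and not just the asymptotic rate) is wanted. I would settle for the asymptotic statement, and if needed prove monotonicity by treating $n$ as a continuous variable and differentiating $u_\beta(n)-u_\alpha(n)$. The factor $\lambda$ only rescales $u=z/\lambda$, so all these conclusions transfer to $z$ at fixed $\lambda$.
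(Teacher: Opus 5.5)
Your proof is correct and follows the paper's route for the stated formula. The paper likewise inverts $e(u)^n=\tfrac12$ and expands $2^{-1/n}=1-(\ln 2)/n+O(n^{-2})$ to get $u^2=n/\ln 2+O(1)$; you use explicit bounds on $1-e^{-x}$ instead, though the lower half of your $O(1)$ claim rests on $1-2^{-1/n}\le(\ln 2)/n$ (which those bounds supply), not merely on $2^{-1/n}\in[\tfrac12,1)$. Your monotonicity-in-$n$ step and quantile-width estimate go beyond the paper's proof, which gives only the asymptotic location and leaves the widening to the table in Remark~\ref{rem:orderrole}; the differentiation you defer does settle every $n$, since $u_\gamma=(e^{L/n}-1)^{-1/2}$ with $L=\ln(1/\gamma)$ gives $\partial_n u_\gamma=\kappa(L/n)/(2n)$ for $\kappa(y)=ye^y(e^y-1)^{-3/2}$, which is strictly decreasing on $(0,\infty)$.
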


\begin{proof}
$E_n(u)=e(u)^n=\tfrac12$ gives $e(u)=2^{-1/n}$, and inverting
$e=u^2/(1+u^2)$ yields \eqref{eq:halfheight}. As $n\to\infty$,
$2^{-1/n}=1-(\ln2)/n+O(n^{-2})$, so the right-hand side is $n/\ln2+O(1)$
and $|u|\sim\sqrt{n/\ln2}$.
\end{proof}

\begin{remark}[Role of the order]\label{rem:orderrole}
It is tempting to describe the order as sharpening a transition \emph{in
place}, leaving scale to set location. 
This is, however, false, as Proposition~\ref{prop:order} shows.
The proposition is stated for $E_n$;
the one-sided member $\varphi_n^{\varepsilon}$ of \eqref{eq:phidef} carries
the odd part as well and therefore transitions slightly later. Its quantile positions, computed exactly from \eqref{eq:phidef}, are
\[
\begin{array}{c|ccc|cc}
 n & u_{25\%} & u_{50\%} & u_{75\%} & \text{width } u_{75}-u_{25} & \text{width}/u_{50\%}\\\hline
 1 & 0.685 & 1.151 & 1.959 & 1.274 & 1.107\\
 4 & 1.608 & 2.374 & 3.778 & 2.170 & 0.914\\
 12 & 2.889 & 4.144 & 6.487 & 3.599 & 0.868
\end{array}
\]
with $u_{50\%}/\sqrt{n}$ running $1.151\to1.196$ over that range. The
absolute width \emph{grows}; only the width relative to the member's own
location narrows.
\end{remark}

Order and scale, therefore, do not control independent features.
Both move the transition: rescaling $\lambda$ preserves the shape, whereas increasing $n$ narrows the transition relative to its location.
The index set $\{(\lambda_j,n_j,\varepsilon_j)\}$ is fixed for an underlying and shared across its expiries. Consequently, $a_j$ multiplies the same function on every slice. This common basis is required by the shared-shape construction of Theorem~\ref{thm:sharedshape}. Under $\varepsilon\mapsto-\varepsilon$, for every member, the roles of the two wings exchange, allowing the same family to accommodate underlyings of opposite skew signs.

\subsection{Exact derivatives}\label{sec:exactderiv} 
The shape is differentiable in closed form. This section derives the first and second derivatives needed by the arbitrage conditions in Section~\ref{sec:arb}.

\begin{proposition}[Primitive derivatives]\label{prop:primderiv} With $\Omega:=1+u^2$, 
\begin{equation}\label{eq:ederiv}
  e=\frac{u^2}{\Omega},\qquad
  e'=\frac{2u}{\Omega^2},\qquad
  e''=\frac{2-6u^2}{\Omega^3},
\end{equation}
\begin{equation}\label{eq:tderiv}
  t=\frac{u}{\Omega^{1/2}},\qquad
  t'=\frac{1}{\Omega^{3/2}},\qquad
  t''=\frac{-3u}{\Omega^{5/2}}.
\end{equation}
For $n\ge1$ and $q:=2n+1$,
\begin{align}
  E_n'&=n\,e^{n-1}e',
  &E_n''&=n(n-1)\,e^{n-2}(e')^2+n\,e^{n-1}e'', \label{eq:Ederiv}\\
  O_n'&=q\,t^{q-1}t',
  &O_n''&=q\Big[(q-1)\,t^{q-2}(t')^2+t^{q-1}t''\Big], \label{eq:Oderiv}
\end{align}
and
$\varphi_n^{\varepsilon\,(\ell)}=\tfrac12\big(E_n^{(\ell)}+\varepsilon O_n^{(\ell)}\big)$
for $\ell=0,1,2$.
\end{proposition}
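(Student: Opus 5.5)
The plan is direct differentiation, done for the primitives $e$ and $t$ first and then passed to $E_n$, $O_n$ and $\varphi_n^{\varepsilon}$ by the chain rule. All expressions are written in terms of $\Omega=1+u^2$, using $\Omega'=2u$.

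For $e$, I would rewrite $e=1-\Omega^{-1}$. Then
\[
e'=\Omega^{-2}\cdot 2u=\frac{2u}{\Omega^2}.
\]
Differentiating once more,
\[
e''=\frac{2\Omega^2-2u\cdot 2\Omega\cdot 2u}{\Omega^4}=\frac{2\Omega-8u^2}{\Omega^3}=\frac{2-6u^2}{\Omega^3}.
\]
For $t=u\,\Omega^{-1/2}$, the product rule gives
\[
t'=\Omega^{-1/2}-u^2\Omega^{-3/2}=\frac{\Omega-u^2}{\Omega^{3/2}}=\Omega^{-3/2}.
\]
Then
\[
t''=-\tfrac32\,\Omega^{-5/2}\cdot 2u=\frac{-3u}{\Omega^{5/2}},
\]
which establishes \eqref{eq:ederiv} and \eqref{eq:tderiv}.

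Next, $E_n=e^n$ and $O_n=t^q$ with $q=2n+1$ are compositions of a monomial with a smooth function. The chain rule gives $(g^p)'=p\,g^{p-1}g'$. A second application, combined with the product rule, gives
\[
(g^p)''=p(p-1)\,g^{p-2}(g')^2+p\,g^{p-1}g''.
\]
Taking $g=e$, $p=n$ yields \eqref{eq:Ederiv}, and taking $g=t$, $p=q$ yields \eqref{eq:Oderiv}.

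When $n=1$ the coefficient $n(n-1)$ vanishes, so the term carrying $e^{-1}$ is absent. This matters because $e^{-1}$ would be singular at $u=0$. For $O_n$ we have $q\ge3$, so all powers of $t$ that appear are nonnegative. Hence every formula is valid at every $u$, including $u=0$.

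Finally, $\varphi_n^{\varepsilon}=\tfrac12(E_n+\varepsilon O_n)$ is a linear combination with constant coefficients, so differentiation commutes with it for $\ell=0,1,2$.

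There is no real obstacle. The only points needing care are the algebraic simplification $\Omega-u^2=1$ in $t'$ and the observation above that the $n=1$ case is well defined at $u=0$.
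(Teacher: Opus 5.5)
Your proposal is correct and follows the paper's route: direct differentiation of $e$ and $t$, the power rule applied to $E_n=e^n$ and $O_n=t^q$, and the same two caveats. The first caveat is that the exponents on the odd side are nonnegative because $q-2=2n-1\ge1$. The second is that the coefficient $n(n-1)$ vanishes at $n=1$, which disposes of the $e^{-1}$ term; the paper calls this a removable $0\cdot\infty$ that should be special-cased rather than substituted literally. You write out the algebra the paper compresses into ``direct differentiation,'' and it checks out.
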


\begin{proof}
Direct differentiation. On the odd side, every exponent is a non-negative integer, since $n\ge1$ gives $q-2=2n-1\ge1$. On the even side, $E_n''$ contains $e^{n-2}$, which at $n=1$ is $e^{-1}$ and is undefined at $u=0$, where $e(0)=0$. The expression is nevertheless correct for $n=1$ because its coefficient $n(n-1)$ vanishes there; read literally, it is a removable $0\cdot\infty$. An implementation must treat $n=1$ as a separate case and evaluate $E_1''=e''$ directly rather than substituting it into the general formula.
\end{proof}

\begin{proposition}[Shape derivatives]\label{prop:shapederiv}
The scaled basis members obey the chain rule
\begin{equation}\label{eq:scaledbasisderiv}
\frac{\mathrm{d}^{\ell}}{\mathrm{d}z^{\ell}}
\varphi_j\!\left(\frac{z}{\lambda_j}\right)
=\lambda_j^{-\ell}\,
\varphi_j^{(\ell)}\!\left(\frac{z}{\lambda_j}\right),
\qquad \ell=0,1,2,
\end{equation}
so the constituents of the shape have derivatives
\begin{equation}\label{eq:Pderiv}
P'=s+\sum_j\frac{a_j}{\lambda_j}\,
\varphi_j'\!\left(\frac{z}{\lambda_j}\right),
\qquad
P''=\sum_j\frac{a_j}{\lambda_j^2}\,
\varphi_j''\!\left(\frac{z}{\lambda_j}\right),
\qquad
Q'=cz,
\qquad
Q''=c.
\end{equation}
Writing $A^{(\ell)}=\tfrac12P^{(\ell)}$ for $\ell=0,1,2$, the shape $f=A+B$ is twice differentiable wherever $B>0$, and its second derivative admits the explicit form
\begin{equation}\label{eq:fppgen}
f''(z)
=\frac{f(z)}{B(z)}\,A''(z)
+\frac{c\big(A(z)-zA'(z)\big)^2}{2B(z)^3}
=\frac{f}{2B}\,P''
+\frac{c\,(P-zP')^2}{8B^3}.
\end{equation}
\end{proposition}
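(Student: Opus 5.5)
The chain rule \eqref{eq:scaledbasisderiv} is immediate: the map $z\mapsto z/\lambda_j$ is affine with derivative $1/\lambda_j$, and $\varphi_j$ is smooth by Proposition~\ref{prop:primderiv}. Applying it termwise to \eqref{eq:Pdef} gives $P'$ and $P''$; the formulas for $Q'$ and $Q''$ follow by direct differentiation of $Q=\tfrac12cz^2$. Since $A=\tfrac12P$, we get $A^{(\ell)}=\tfrac12P^{(\ell)}$.

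Next I treat $B=\sqrt{A^2+Q}$. On the open set where $B>0$, the radicand $R=A^2+Q$ is smooth and positive, so $B$ is smooth there. Differentiating gives
\[
B'=\frac{AA'+Q'/2}{B}=\frac{AA'+\tfrac12cz}{B}.
\]
Differentiating once more, and using $BB'=AA'+\tfrac12cz$,
\[
B''=\frac{(A')^2+AA''+\tfrac12c-(B')^2}{B}.
\]
The heart of the proof is to simplify $(A')^2+\tfrac12c-(B')^2$. Put $N:=AA'+\tfrac12cz$, so that $B^2(B')^2=N^2$. Then I expand
\[
B^2\big[(A')^2+\tfrac12c\big]-N^2=(A^2+\tfrac12cz^2)\big((A')^2+\tfrac12c\big)-\big(AA'+\tfrac12cz\big)^2.
\]
This is a Lagrange-identity computation. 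The $A^2(A')^2$ terms cancel, as do the $\tfrac14c^2z^2$ terms. What remains is
\[
\tfrac12cA^2+\tfrac12cz^2(A')^2-czAA'=\tfrac12c\,(A-zA')^2.
\]
Hence
\[
B''=\frac{AA''}{B}+\frac{c(A-zA')^2}{2B^3}.
\]

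Adding $A''$ gives
\[
f''=A''\Big(1+\frac{A}{B}\Big)+\frac{c(A-zA')^2}{2B^3}=\frac{f}{B}A''+\frac{c(A-zA')^2}{2B^3},
\]
since $B+A=f$. This is the first form of \eqref{eq:fppgen}. Substituting $A=P/2$, $A'=P'/2$, $A''=P''/2$ turns $(A-zA')^2$ into $\tfrac14(P-zP')^2$, which yields the second form. The only real obstacle is spotting the Lagrange-type cancellation. Everything else is bookkeeping, plus the caveat, already handled in Proposition~\ref{prop:primderiv}, that $E_1''$ must be read as $e''$.
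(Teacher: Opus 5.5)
Your proposal is correct and follows the paper's proof essentially step for step: the chain rule for $u=z/\lambda_j$, the recursive formulas for $B'$ and $B''$ from differentiating $B^2=A^2+Q$, and the same cancellation. The paper multiplies $B''$ by $B^2$ and substitutes $B^2=A^2+\tfrac12cz^2$, which is exactly your Lagrange-identity step giving $B^2B''=AA''B^2+\tfrac12c(A-zA')^2$, before adding $A''$ and using $1+A/B=f/B$ to reach both forms of \eqref{eq:fppgen}.
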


\begin{proof}
Equation \eqref{eq:scaledbasisderiv} is the chain rule for $u=z/\lambda_j$, and \eqref{eq:Pderiv} follows by differentiating \eqref{eq:Pdef} and \eqref{eq:Qdef} term by term.

For $B=\sqrt{A^2+Q}$, differentiate $B^2=A^2+Q$ once to obtain $2BB'=2AA'+Q'$, so
\begin{equation}\label{eq:Bprime}
B'=\frac{A\,A'+\tfrac12 Q'}{B}.
\end{equation}
Differentiating again gives $(B')^2+BB''=(A')^2+AA''+\tfrac12 Q''$, so
\begin{equation}\label{eq:Bpprime}
B''=\frac{(A')^2+A\,A''+\tfrac12 Q''-(B')^2}{B}.
\end{equation}
Since $f=A+B$,
\begin{equation}\label{eq:fABderiv}
f=A+B,\qquad f'=A'+B',\qquad f''=A''+B''.
\end{equation}

To obtain \eqref{eq:fppgen}, substitute $Q=\tfrac12 cz^2$, $Q'=cz$, and $Q''=c$ into \eqref{eq:Bprime}--\eqref{eq:Bpprime} and multiply \eqref{eq:Bpprime} by $B^2$:
\[
B^2B''
=B^2(A')^2+B^2AA''+\tfrac12 cB^2-\big(AA'+\tfrac12 cz\big)^2 .
\]
Using $B^2=A^2+\tfrac12 cz^2$, the terms not containing $A''$ collapse:
\[
B^2(A')^2-A^2(A')^2=\tfrac12 cz^2(A')^2,\qquad
\tfrac12 cB^2-\tfrac14 c^2z^2=\tfrac12 cA^2,
\]
and the cross term is $-czAA'$, so
\[
B^2B''=AA''B^2+\tfrac12 c\big(A^2-2zAA'+z^2(A')^2\big)
=AA''B^2+\tfrac12 c\,(A-zA')^2 .
\]
Hence $B''=AA''/B+c(A-zA')^2/(2B^3)$, and by \eqref{eq:fABderiv}
\[
f''=A''+B''=A''\Big(1+\frac{A}{B}\Big)+\frac{c(A-zA')^2}{2B^3}
=\frac{f}{B}A''+\frac{c(A-zA')^2}{2B^3},
\]
since $1+A/B=(A+B)/B=f/B$. The second form in \eqref{eq:fppgen} is $A=\tfrac12P$.
\end{proof}

Equations \eqref{eq:Bprime}--\eqref{eq:Bpprime} are written in recursive form because $B''$ reuses $B'$, so evaluating the triple $(f,f',f'')$ requires only one square-root evaluation. The recursive form also keeps the evaluated value and its derivatives internally consistent. Substituting a different denominator would produce derivatives that are no longer those of the evaluated $f$.

\begin{proposition}[Values at $z=0$]\label{prop:at-the-forward}
For every admissible parameter vector,
\begin{align}
  f(0)&=1, \label{eq:f0}\\
  f'(0)&=s, \label{eq:fp0}\\
  f''(0)&=c+\sum_{j\,:\,n_j=1}\frac{a_j}{\lambda_j^2}. \label{eq:fpp0}
\end{align}
\end{proposition}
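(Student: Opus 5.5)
The plan is to evaluate $P$, $P'$, $P''$ at $z=0$ from the basis expansions of Proposition~\ref{prop:primderiv}, then substitute into $f=A+B$ and the closed forms of Proposition~\ref{prop:shapederiv}. The key observation is that $B(0)=\sqrt{A(0)^2+Q(0)}=\tfrac12|P(0)|$, so everything reduces to $P$ and its first two derivatives at the origin. There $B(0)=\tfrac12>0$, so $f$ is twice differentiable near $0$ and the formulas of Proposition~\ref{prop:shapederiv} apply.

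\textbf{Basis values at $u=0$.} From \eqref{eq:ederiv}--\eqref{eq:tderiv}:
\[
e(0)=0,\quad e'(0)=0,\quad e''(0)=2,\qquad t(0)=0,\quad t'(0)=1,\quad t''(0)=0.
\]
By \eqref{eq:phibound}, $\varphi_n^\varepsilon(0)=0$. For the derivatives I use the closed form $\varphi=\tfrac12 t^{2n}(1+\varepsilon t)$ with $t(u)=u+O(u^3)$. This gives
\[
\varphi_n^{\varepsilon}(u)=\tfrac12 u^{2n}+O(u^{2n+1}).
\]
Hence $\varphi'(0)=0$ for all $n\ge1$, while $\varphi''(0)=1$ if $n=1$ and $0$ if $n\ge2$. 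As a cross-check via \eqref{eq:Ederiv}--\eqref{eq:Oderiv}, $E_1''(0)=e''(0)=2$, and every $O_n$ term carries a factor $t^{q-2}$ with $q-2\ge1$, which vanishes at $0$. The only delicate point is this $n=1$ versus $n\ge2$ split: the literal $0\cdot\infty$ in $E_n''$ at $n=1$ must be read as in the proof of Proposition~\ref{prop:primderiv}, which is why I prefer the Taylor argument.

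\textbf{Values of $P$.} With the chain rule \eqref{eq:scaledbasisderiv}, the basis values give
\[
P(0)=1,\qquad P'(0)=s,\qquad P''(0)=\sum_{j:n_j=1}\frac{a_j}{\lambda_j^2}.
\]
These values hold for every real $c$, since $Q(0)=0$, so no domain restriction is needed at the origin.

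\textbf{Values of $f$.} First, $f(0)=A(0)+B(0)=\tfrac12+\tfrac12=1$. Next, by \eqref{eq:Bprime},
\[
B'(0)=\frac{A(0)A'(0)+\tfrac12 c\cdot 0}{B(0)}=\frac{\tfrac12\cdot\tfrac12 s}{\tfrac12}=\tfrac12 s,
\]
so $f'(0)=A'(0)+B'(0)=s$. Finally, \eqref{eq:fppgen} at $z=0$ gives
\[
f''(0)=\frac{f(0)}{2B(0)}P''(0)+\frac{c\,P(0)^2}{8B(0)^3}=\frac{1}{2\cdot\tfrac12}P''(0)+\frac{c}{8\cdot\tfrac18}=P''(0)+c.
\]
This is \eqref{eq:fpp0}.
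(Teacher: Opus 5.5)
Your proof is correct and follows essentially the same route as the paper: evaluate the basis members and $P,P',P''$ at the origin, then substitute into $f=A+B$ and the derivative formulas for $B$ at $z=0$, where $A=B=\tfrac12$. The differences are minor. You read off $\varphi'(0)$ and $\varphi''(0)$ from the expansion $\tfrac12u^{2n}+O(u^{2n+1})$ of the closed form \eqref{eq:phibound}, which avoids the removable $n=1$ singularity in \eqref{eq:Ederiv}. You also use the closed form \eqref{eq:fppgen}, whereas the paper substitutes into the recursive \eqref{eq:Bpprime} and cancels the $(s/2)^2$ terms.
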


\begin{proof}
From \eqref{eq:et}, $e(0)=t(0)=0$, so $E_n(0)=O_n(0)=0$ and
$\varphi_j(0)=0$ for all $j$; hence $P(0)=1$, $Q(0)=0$,
$A(0)=B(0)=\tfrac12$, and $f(0)=1$. Next, $e'(0)=0$ gives $E_n'(0)=0$, and
$t(0)=0$ with $q-1\ge2$ gives $O_n'(0)=0$, so $P'(0)=s$, $A'(0)=s/2$, and
by \eqref{eq:Bprime} $B'(0)=s/2$, whence $f'(0)=s$. 

For the curvature,
$E_n''(0)$ vanishes for $n\ge2$ and equals $e''(0)=2$ for $n=1$, while
$O_n''(0)=0$ for all $n$; so
$\varphi_n^{\varepsilon\,\prime\prime}(0)=\mathbf{1}[n{=}1]$ independently
of $\varepsilon$, giving $P''(0)=\sum_{j:n_j=1}a_j/\lambda_j^2$ and
$A''(0)=\tfrac12P''(0)$. Substituting into \eqref{eq:Bpprime} at $z=0$,
where $A=B=\tfrac12$, $A'=B'=s/2$, $Q''=c$, the $(s/2)^2$ terms cancel and
$B''(0)=A''(0)+c$; hence $f''(0)=2A''(0)+c=P''(0)+c$, which is
\eqref{eq:fpp0}.
\end{proof}

Proposition~\ref{prop:at-the-forward} gives the parameters a direct interpretation. The parameter $\sigma_0$ is the at-the-forward implied volatility, while $s$ is 
the dimensionless skew with $w'(0)=\sca s$. In contrast, $c$ is \emph{not} the at-the-forward curvature. Equation~\eqref{eq:fpp0} shows that both $c$ and the order-one amplitudes contribute to $f''(0)$, and that only order-one members contribute at the forward.

\begin{proposition}[Exact nesting]\label{prop:nesting}
Setting $a_m=0$ in a member with $m$ amplitudes yields exactly the member with $m-1$ amplitudes. Setting $a_1=\dots=a_m=0$ yields the three-parameter base member
\begin{equation}\label{eq:fbase}
  f_{\mathrm{base}}(z)\;=\;\frac{1+sz}{2}+\sqrt{\frac{(1+sz)^2}{4}+\frac{cz^2}{2}}.
\end{equation}
\end{proposition}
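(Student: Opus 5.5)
The plan is to argue directly from the definitions in \eqref{eq:Pdef}--\eqref{eq:fdef}. I will observe that the amplitudes enter $f$ only through $P$, and that $Q$, $A$, $B$ and $f$ are all obtained from $P$ (together with $c$ and $z$) by fixed formulas that do not otherwise involve $m$ or the basis. Hence it suffices to show that setting $a_m=0$ turns the polynomial-plus-basis expression $P$ of the $m$-amplitude member into the $P$ of the $(m-1)$-amplitude member, pointwise in $z$.

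First, I will fix the index set $\{(\lambda_j,n_j,\varepsilon_j)\}_{j\le m}$. As noted after Remark~\ref{rem:orderrole}, this set is fixed for the underlying, so the $(m-1)$-amplitude member is understood to use the first $m-1$ members of the same basis. Second, with $a_m=0$, the $m$-th summand $a_m\varphi_m(z/\lambda_m)$ vanishes identically. This uses only that $\varphi_m$ is real-valued and finite everywhere, which holds because, by \eqref{eq:phibound}, $\varphi_m$ takes values in $[0,1)$. Therefore $P$ coincides with $1+sz+\sum_{j=1}^{m-1}a_j\varphi_j(z/\lambda_j)$ for every $z$. Third, since $Q=\tfrac12 cz^2$ is unchanged and $A$, $B$, $f$ are the same functions of $(P,Q)$, the two shapes agree identically. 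The parameter vector \eqref{eq:pvec} drops from dimension $3+m$ to $2+m$ with $(\sigma_0,s,c,a_1,\dots,a_{m-1})$ untouched.

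For the second claim, I will iterate this step $m$ times, by induction on $m$, or equivalently note that the whole sum in \eqref{eq:Pdef} vanishes when all $a_j=0$. Then $P(z)=1+sz$, and substituting this into \eqref{eq:fdef} gives \eqref{eq:fbase} directly.

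There is no genuine obstacle here. The only point needing care is the convention that the nested members share the same ordered basis, so that dropping the last amplitude is meaningful. Because the statement asserts the identity only where $f$ is defined, no domain issue arises: both sides have the same radicand $P^2/4+cz^2/2$, so they are real-valued on exactly the same set.
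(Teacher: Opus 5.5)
Your proposal is correct and follows essentially the same route as the paper, which also notes that each amplitude enters only through its term $a_j\varphi_j$ in $P$, and that this term vanishes identically when $a_j=0$. Your added remarks on the shared ordered basis, the boundedness from \eqref{eq:phibound}, and the identical radicand are careful elaborations of that one-line argument rather than a different approach.
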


\begin{proof}
Immediate from \eqref{eq:Pdef}: an amplitude enters only through the single
term $a_j\varphi_j$, which vanishes identically when $a_j=0$.
\end{proof}

Since the nesting is exact, the family is a genuine ladder indexed by $m$, and
any selection over $m$ compares nested hypotheses.

\begin{remark}[Correspondence with the eSSVI slice]\label{rem:ssvi}
The base member \eqref{eq:fbase} coincides with the per-slice form of the eSSVI parameterization \cite{HendriksMartini2019}; Klassen \cite{Klassen2016} studies the same three-parameter curve as a per-expiry fit under the name S3. Writing that form with total variance $\vartheta$, correlation $\rho_*\in(-1,1)$, and curvature $\phi>0$, the
correspondence is
\begin{equation}\label{eq:ssvimap}
    \sigma_0=\sqrt{\vartheta/T},\qquad
    s=\rho_*\,\phi\,\sqrt{\vartheta},\qquad
    c=\tfrac12\phi^2\vartheta\,(1-\rho_*^2).
\end{equation}
Thus, the eSSVI slice is recovered exactly at $m=0$, while $m>0$ adds the one-sided basis terms. Section~\ref{sec:basemember} gives the inverse correspondence.
\end{remark}

\section{Asymptotics, wings, and the admissible region}\label{sec:asym}
The bounded basis terms \eqref{eq:phidef} modify the interior of the smile without changing its leading-order wing behavior. We now use this separation to derive the asymptotic wing slopes and then determine the parameter region on which the shape is defined.

\begin{definition}[Radicand coefficient]\label{def:chi} Define \[ \chi:=\frac{s^2}{4}+\frac{c}{2}. \] This quantity is the coefficient of $z^2$ in the leading-order radicand as $|z|\to\infty$. \end{definition}

\begin{proposition}[Linear wings]\label{prop:linwings}
Suppose $\chi>0$. Then, as $z\to\pm\infty$,
\begin{equation}\label{eq:wingslopes}
    f(z)\;=\;C_{\pm}\,|z|+O(1),\qquad
    C_{\pm}\;=\;\sqrt{\chi}\pm\frac{s}{2},
\end{equation}
and in $k$-space
\begin{equation}\label{eq:kwings}
  w(k)\;=\;\sca\,C_{\pm}\,|k|+O(\sca^2),\qquad k\to\pm\infty .
\end{equation}
\end{proposition}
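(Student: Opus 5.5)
We use Proposition~\ref{prop:asymconst} to write $P(z)=sz+\beta_\pm+r_\pm(z)$ as $z\to\pm\infty$. Here $\beta_\pm:=1+\sum_{j:\varepsilon_j=\pm1}a_j$ is finite, and $r_\pm(z)\to0$. In any case, by \eqref{eq:phibound}, $|P(z)-sz|\le 1+\sum_j|a_j|=:K_a$ for all $z$. This uniform bound is all that is needed for an $O(1)$ error. The plan is to expand $f=A+B$ with $A=\tfrac12 P$. The only delicate term is $B$, which we handle by comparing $B$ with $\sqrt{\chi}\,|z|$.

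First, compute the radicand exactly:
\[
B(z)^2=\tfrac14 P^2+\tfrac12 cz^2=\chi z^2+\tfrac12 sz\,(P-sz)+\tfrac14 (P-sz)^2 .
\]
Put $h(z):=P(z)-sz$, so $|h|\le K_a$. Then $B^2-\chi z^2=\tfrac12 szh+\tfrac14 h^2$, and this is $O(|z|)$. Next use the elementary identity
\[
B-\sqrt{\chi}|z|=\frac{B^2-\chi z^2}{B+\sqrt{\chi}|z|}.
\]
The numerator is bounded by $\tfrac12|s|K_a|z|+\tfrac14K_a^2$. The denominator is at least $\sqrt{\chi}|z|$, since $B\ge0$. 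Because $\chi>0$, the quotient is bounded for $|z|\ge1$, so $B=\sqrt{\chi}|z|+O(1)$. This step is where the hypothesis $\chi>0$ is used. Without it the denominator could fail to grow, as in the degenerate case $s=c=0$.

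Then $A=\tfrac12 sz+\tfrac12 h=\tfrac12 sz+O(1)$. Adding the two pieces gives $f(z)=\sqrt{\chi}|z|+\tfrac{s}{2}z+O(1)$. Since $z=\pm|z|$ on the two wings, this equals $(\sqrt{\chi}\pm s/2)|z|+O(1)$, which is \eqref{eq:wingslopes}. We must also check that $B$ is real for large $|z|$. From the expansion, $B^2=\chi z^2+O(|z|)>0$ eventually, so the representation is defined on the wings regardless of the sign of $c$; the global domain question is left to Theorem~\ref{thm:domain}.

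For \eqref{eq:kwings}, substitute $z=k/\sca$ into $w=\sca^2 f(z)$. This gives $w=\sca^2 C_\pm|k|/\sca+\sca^2\,O(1)=\sca C_\pm|k|+O(\sca^2)$. The $O(1)$ constant depends only on $(s,c,a)$, so it is uniform in $\sca$ and the scaling is legitimate.

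We expect no real obstacle here. The only point needing care is to keep the error bound uniform, by using the global bound $|h|\le K_a$ rather than only the limits of $h$. If one wants the sharper constant, the same identity gives $B-\sqrt{\chi}|z|\to \pm s\beta_\pm/(4\sqrt{\chi})$, but this refinement is not required.
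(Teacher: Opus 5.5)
Your proof is correct and follows the same route as the paper: bound $P-sz$ using Proposition~\ref{prop:asymconst}, obtain $B^2=\chi z^2+O(|z|)$, deduce $B=\sqrt{\chi}\,|z|+O(1)$, and add $A=\tfrac12 sz+O(1)$. You make explicit two steps the paper leaves implicit: the rationalization $B-\sqrt{\chi}|z|=(B^2-\chi z^2)/(B+\sqrt{\chi}|z|)$, and the fact that $B$ is real on the wings when $c<0$.
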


\begin{proof}
By Proposition~\ref{prop:asymconst}, $P(z)=sz+O(1)$, so
$A^2+Q=\chi z^2+O(|z|)$ and, for $\chi>0$, $B=\sqrt{\chi}\,|z|+O(1)$;
hence $f=\tfrac12 sz+\sqrt{\chi}|z|+O(1)$, which is \eqref{eq:wingslopes}
on each side, and \eqref{eq:kwings} follows from $|k|=\sca|z|$.
\end{proof}

\begin{proposition}[Signs of the wing slopes]\label{prop:wingsigns}
Assume $\chi\ge0$. Then $C_{\pm}$ are real-valued with
\begin{equation}\label{eq:wingsumprod}
  C_++C_-\;=\;2\sqrt{\chi}\;\ge\;0,\qquad
  C_+C_-\;=\;\chi-\frac{s^2}{4}\;=\;\frac{c}{2}.
\end{equation}
Consequently, $C_+\ge0$ and $C_-\ge0$ if and only if $c\ge0$; and if
$-\tfrac12 s^2\le c<0$, then exactly one of the two is negative, the one on
the wing opposite to the sign of $s$.
\end{proposition}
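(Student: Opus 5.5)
The plan is to treat $C_\pm$ as the two roots of a monic quadratic, so that the sign statements follow from Vieta's relations. Since $\chi\ge0$, $\sqrt{\chi}$ is a real non-negative number and $C_\pm=\sqrt{\chi}\pm s/2$ are real. Adding the two gives $C_++C_-=2\sqrt{\chi}\ge0$ directly. Multiplying them gives the difference of squares $C_+C_-=\chi-s^2/4$, and Definition~\ref{def:chi} turns this into $c/2$. This establishes \eqref{eq:wingsumprod}.

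Next comes the sign characterization. If $c\ge0$, then the product is non-negative, so $C_+$ and $C_-$ are not of strictly opposite signs. Their sum is non-negative, so they cannot both be negative. Hence both are $\ge0$. Conversely, if both $C_\pm\ge0$, their product $c/2$ is $\ge0$. The one point to watch is the degenerate case $\chi=0$, where both slopes vanish only if $s=0$. This is handled by the same product/sum logic, since a sum of zero together with a non-negative product forces both to be zero, so no separate argument is needed.

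For the final claim, assume $-\tfrac12 s^2\le c<0$. The lower bound is exactly $\chi\ge0$, so the slopes are real. The product $c/2<0$ forces exactly one of $C_\pm$ to be negative; note that $s\ne0$ here, since otherwise $c\ge0$. To identify which one, compare $C_+-C_-=s$. If $s>0$, then $C_+>C_-$, so the negative one is $C_-$, which lies on the $z\to-\infty$ wing, opposite to the sign of $s$. The case $s<0$ is symmetric.

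No step is a genuine obstacle. The only care needed is to keep the degenerate case $\chi=0$ inside the sum/product argument and to note that $s\neq0$ whenever $c<0$.
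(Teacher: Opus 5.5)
Your proof is correct and follows the paper's argument: compute the sum $2\sqrt{\chi}$ and product $c/2$, use that two reals with non-negative sum are both non-negative exactly when their product is, and when $c<0$ conclude that the signs are strictly opposite. Your use of $C_+-C_-=s$ to identify the negative slope (and your observation that $s\neq0$) makes explicit a step the paper only asserts.
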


\begin{proof}
Two reals with a non-negative sum are both non-negative if and only if their product is
non-negative, which by \eqref{eq:wingsumprod} is $c\ge0$; if $c<0$ the
product is negative, so the two have strictly opposite signs, and the
negative one is $C_-$ when $s>0$ and $C_+$ when $s<0$.
\end{proof}

\begin{remark}[The reality condition]\label{rem:reality}
The reality of the wing slopes is exactly
\begin{equation}\label{eq:realitycond}
  \chi\ge0 \qquad\Longleftrightarrow\qquad c\ge-\frac{s^2}{2},
\end{equation}
which is a strictly weaker condition than the domain \eqref{eq:domain}
below; see Remarks~\ref{rem:basequad} and \ref{rem:counterexample}. Within \eqref{eq:domain}, both slopes are
non-negative by Proposition~\ref{prop:wingsigns}. The linear growth in \eqref{eq:kwings} places $w$ within the $O(|k|)$ asymptotic envelope required by Lee's moment bound \cite{Lee2004}; Proposition~\ref{prop:wingallm} below gives the constant.
\end{remark}

\subsection{The domain}\label{sec:domain}

The shape \eqref{eq:fdef} is real only where its radicand
\begin{equation}\label{eq:radicand}
  R(z)\;:=\;A(z)^2+Q(z)\;=\;\frac{P(z)^2}{4}+\frac{c\,z^2}{2}
\end{equation}
is non-negative. The following settles exactly where that is.

\begin{theorem}[Domain]\label{thm:domain}
Fix an amplitude count $m\ge0$. For any choice of basis members and any amplitudes:
\begin{itemize}\itemsep3pt
    \item[(i)] If $c\ge0$ then $Q\ge0$, hence $R\ge A^2\ge0$ for every
    $z\in\R$; $f$ is real with $f\ge\max(P,0)\ge0$, and $f>0$ everywhere
    when $c>0$.
    \item[(ii)] If $c<0$ then $R$ takes negative values.
\end{itemize}
Consequently $R\ge0$ on all of $\R$ if and only if $c\ge0$, for every
member of the family.
\end{theorem}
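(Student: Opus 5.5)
Part (i) is a direct computation. If $c\ge0$ then $Q(z)=\tfrac12 cz^2\ge0$, so $R=A^2+Q\ge A^2\ge0$ and $B=\sqrt{R}\ge|A|$. Hence $f=A+B\ge A+|A|=2\max(A,0)=\max(P,0)\ge0$.

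For strict positivity when $c>0$, I will split on $z$. At $z=0$ we have $f(0)=1$ by Proposition~\ref{prop:at-the-forward}. For $z\neq0$, $Q(z)>0$, so $B>|A|$ strictly and therefore $f=A+B>A+|A|\ge0$.

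Part (ii) is the substantive direction: for every $c<0$, every choice of basis members and every amplitudes (including an arbitrary $s$), I must exhibit a point where $R<0$. Since $R(z)=\tfrac14P(z)^2+\tfrac12cz^2$, it suffices to find $z^*$ with $P(z^*)^2<-2c\,(z^*)^2$. My plan is to locate a zero of $P$ away from the origin, where $R(z^*)=\tfrac12c(z^*)^2<0$. If no such zero exists, I will fall back on the wing asymptotics.

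\textbf{Case $s\neq0$.} By Proposition~\ref{prop:asymconst}, $P(z)=sz+O(1)$ as $z\to\pm\infty$, so $P$ tends to $+\infty$ on one wing and $-\infty$ on the other. It is continuous, so by the intermediate value theorem it has a zero $z^*$. Moreover $z^*\neq0$ because $P(0)=1$. Then $R(z^*)=\tfrac12c(z^*)^2<0$.

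\textbf{Case $s=0$.} Here $P$ is bounded: by \eqref{eq:phibound}, $|P|\le1+\sum_j|a_j|$. Meanwhile $\tfrac12cz^2\to-\infty$, so $R(z)\le\tfrac14(1+\sum|a_j|)^2+\tfrac12cz^2<0$ for all large $|z|$. The only care needed is to note that boundedness of each $\varphi_j$ is uniform, which \eqref{eq:phibound} supplies.

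The main obstacle is precisely that part (ii) must hold uniformly over all members and all amplitudes. A naive asymptotic argument fails when $s\neq0$: the leading coefficient of $R$ is then $\chi$, which is positive whenever $-s^2/2<c<0$, so the wings do not detect negativity (this is the gap flagged in Remark~\ref{rem:reality}). The zero-of-$P$ device resolves this, because at a zero of $P$ only the negative $Q$ term survives. The final \emph{if and only if} then follows by combining (i) and (ii).
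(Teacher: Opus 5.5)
Your proposal is correct and follows the paper's proof essentially step for step. Part (i) uses $B\ge|A|$, with $B>|A|$ for $z\neq0$ and $f(0)=1$. Part (ii) finds a zero of $P$ away from the origin when $s\neq0$, and when $s=0$ uses boundedness of $P$ against $Q\to-\infty$. The only cosmetic difference is that the paper applies the intermediate value theorem between $P(0)=1$ and the wing where $P\to-\infty$, which gives $z_0\neq0$ directly rather than needing both wings.
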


\begin{proof}
(i) $Q=\tfrac12 cz^2\ge0$ when $c\ge0$, so $R=A^2+Q\ge A^2\ge0$ and
$B=\sqrt{R}\ge|A|$, giving $f=A+B\ge A+|A|=\max(P,0)\ge0$. If $c>0$ then $R>A^2$ strictly
for $z\neq0$, so $B>|A|$ and $f>0$; at $z=0$, $f=1$ by \eqref{eq:f0}.

(ii) Suppose $c<0$, so $Q(z)=\tfrac12cz^2<0$ for every $z\neq0$. By
\eqref{eq:phibound} the member sum is bounded:
$\big|\sum_j a_j\varphi_j(z/\lambda_j)\big|\le\sum_j|a_j|=:M_a$ for all
$z$. If $s\neq0$, then $|P(z)-1-sz|\le M_a$, so $P(z)\to-\infty$ as
$z\to-\operatorname{sgn}(s)\,\infty$, while $P(0)=1>0$; $P$ is
continuous, so $P(z_0)=0$ for some $z_0\neq0$. There $A(z_0)=0$ and
$R(z_0)=Q(z_0)=\tfrac12 cz_0^2<0$. If $s=0$, then $|P|\le1+M_a$, so $A^2$
is bounded while $Q(z)\to-\infty$ as $|z|\to\infty$; hence $R(z)<0$ for
all sufficiently large $|z|$.
\end{proof}

The necessity argument uses only the boundedness of the basis. The same property that prevents the amplitudes from changing the leading-order wing slopes also makes the domain condition independent of the amplitude count.

\begin{remark}[The base-member quadratic]\label{rem:basequad}
At $m=0$, where $P=1+sz$, the failure is quantitative:
\begin{equation}\label{eq:Rbase}
  R(z)\;=\;\chi z^2+\frac{s}{2}z+\frac14,
  \qquad
  \operatorname{disc}R\;=\;\frac{s^2}{4}-\chi\;=\;-\frac{c}{2},
\end{equation}
so for $c<0$, the radicand is negative exactly between two real roots when $\chi>0$, and for all large $|z|$ when $\chi\le0$. The sign of $\chi$ controls only the leading coefficient of \eqref{eq:Rbase}. It does not control the interior of the radicand, so no lower bound on $\chi$ alone guarantees the domain.
\end{remark}

\begin{remark}[A concrete counterexample]\label{rem:counterexample}
Take $s=-0.55$, $c=-0.05$. Then $\chi=+0.0506>0$, so the wing slopes are real and \eqref{eq:realitycond} holds; yet $\operatorname{disc}R=0.025>0$ and $R<0$ on $z\in(1.154,\,4.278)$, an interval of moderate normalized strike. At the vertex $R=-0.1235$, the shape has no real value.
\end{remark}

\begin{corollary}[Domain and admissible region]\label{cor:admissible}
By Theorem~\ref{thm:domain}, the shape is real-valued exactly on
\begin{equation}\label{eq:domain}
    c\;\ge\;0 ,
\end{equation}
and for $c>0$ it is smooth and strictly positive with $B>0$ on all of $\R$.
\end{corollary}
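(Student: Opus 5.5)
The corollary has two parts. The first is the domain statement. The second is smoothness, strict positivity, and $B>0$ when $c>0$. I will read off the first from Theorem~\ref{thm:domain} and prove the second from the structure of $f=A+B$.

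For the domain claim, Theorem~\ref{thm:domain}(i) shows that $c\ge0$ implies $R\ge0$ on all of $\R$. Hence $B=\sqrt{R}$ is real and $f$ is real-valued everywhere. Conversely, part (ii) shows that $c<0$ forces $R(z_0)<0$ at some $z_0$, so $f(z_0)$ is not real. Together these give \eqref{eq:domain}, uniformly in $m$, the basis members and the amplitudes.

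For the case $c>0$, I first show $R>0$ everywhere. For $z\neq0$ we have $R=A^2+\tfrac12cz^2\ge\tfrac12cz^2>0$. At $z=0$, since $\varphi_j(0)=0$ for every $j$, we get $P(0)=1$ and so $R(0)=\tfrac14>0$. Thus $B=\sqrt R>0$ on $\R$.

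Next, smoothness. Each basis member $\varphi_n^{\varepsilon}$ is smooth, being a polynomial in $t(u)=u(1+u^2)^{-1/2}$, which is smooth because $1+u^2>0$. Hence $P$ and $A$ are smooth, and $R$ is smooth. The map $x\mapsto\sqrt x$ is smooth on $(0,\infty)$, and $R$ takes values there, so $B$ is smooth and so is $f=A+B$. Strict positivity is already in Theorem~\ref{thm:domain}(i): for $z\ne0$, $B>|A|$ gives $f>0$, and $f(0)=1$.

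The only point needing care is $z=0$, where $Q$ vanishes and positivity of $R$ must come from $A$ rather than $Q$. The normalization $P(0)=1$ from Proposition~\ref{prop:at-the-forward} handles this. I note that at $c=0$ the set where $B>0$ can fail, namely the zeros of $P$, is exactly why the smoothness claim is restricted to $c>0$.
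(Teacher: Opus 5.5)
Your proposal is correct and follows the paper's route. The domain statement is read off from Theorem~\ref{thm:domain}(i)--(ii), and $B>0$ for $c>0$ comes from $R\ge\tfrac12cz^2>0$ off the origin together with $R(0)=\tfrac14$, exactly as the paper argues in the proof of Corollary~\ref{cor:convexity}; your explicit smoothness argument (each $\varphi_n^{\varepsilon}$ is a polynomial in the smooth function $t$, and $\sqrt{\cdot}$ is smooth on $(0,\infty)$) fills in a step the paper leaves implicit.
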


We call $c>0$ the admissible region. Corollary~\ref{cor:convexity} and everything after it assume it. At $c=0$, the shape is real but $Q\equiv0$, so $f=A+|A|=\max(P,0)$: if $P$ crosses zero, $f$ has a kink there and is zero wherever $P\le0$. In practice, $c\ge\delta_c$ for some $\delta_c>0$ has to be imposed as a bound on the parameter, not a penalty, since by Remark~\ref{rem:basequad}, a bound on $\chi$ alone does not control the interior of the radicand.

\begin{remark}[$c$ is not the at-the-forward curvature]\label{rem:cnotcurv} 
    It is tempting to read \eqref{eq:domain} as forfeiting the reason for using a family of this kind at all. The motivation for admitting more general curvature than a single positive-curvature mode allows is that W-shaped smiles observed around scheduled events can carry a local maximum of total variance at the forward, i.e.\ \emph{negative at-the-forward curvature of the shape}. That is a condition on $f''(0)$.
    By Proposition~\ref{prop:at-the-forward}, 
    \[ 
        f''(0)\;=\;c+\sum_{j\,:\,n_j=1}\frac{a_j}{\lambda_j^2}. 
    \] 
    Thus $c$ and $f''(0)$ are different quantities, and $f''(0)<0$ remains attainable with $c>0$ through negative order-one amplitudes. The restriction $c\ge0$ therefore does not remove the negative at-the-forward curvature that motivates the extension. It removes only parameter choices for which the real-valued shape fails to exist. Allowing $c<0$ is unnecessary for producing negative at-the-forward curvature and instead takes the model outside its real-valued domain. 
\end{remark}

Condition~\eqref{eq:realitycond} is only an asymptotic statement. For $c<0$, it becomes $|s|\ge\sqrt{2|c|}$, which ensures that the linear growth of $P$ dominates the negative quadratic contribution in the wings.

\begin{corollary}[Sources of negative curvature]\label{cor:convexity}
Let $c>0$. Then the second term of \eqref{eq:fppgen} is non-negative, and consequently
    \[   \text{for every }z\in\R:\qquad f''(z)<0 \;\Longrightarrow\; P''(z)<0 . \] 
In particular, at $m=0$ one has $P''\equiv0$ and $P-zP'\equiv1$, so 
\begin{equation}\label{eq:baseconvex}   
    f_{\mathrm{base}}''(z)\;=\;\frac{c}{8\,B(z)^3}\;>\;0   \qquad\text{for all }z\in\R,
\end{equation} 
and the base member is strictly convex on all of $\R$. \end{corollary}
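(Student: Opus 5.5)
The plan is to read the claim directly off the explicit second-derivative formula \eqref{eq:fppgen} of Proposition~\ref{prop:shapederiv}, using Corollary~\ref{cor:admissible} to guarantee that it applies everywhere. First, for $c>0$, Corollary~\ref{cor:admissible} gives $B>0$ on all of $\R$ and $f>0$ by Theorem~\ref{thm:domain}(i). Hence $f$ is twice differentiable at every $z$, and
\[
f''(z)=\frac{f}{2B}\,P''+\frac{c\,(P-zP')^2}{8B^3}
\]
holds pointwise. In the second term, $c>0$, the square is non-negative, and $B^3>0$, so the term is non-negative. In the first term, the prefactor $f/(2B)$ is strictly positive.

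The implication then follows by contraposition. Suppose $P''(z)\ge0$ at some $z$. Then both terms are non-negative, so $f''(z)\ge0$. Therefore $f''(z)<0$ forces $P''(z)<0$. The only point needing care is the strict positivity of $f/(2B)$, which is exactly what Theorem~\ref{thm:domain}(i) supplies for $c>0$.

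For the base member, $m=0$ gives $P=1+sz$, so $P'=s$ and $P''\equiv0$. Hence $P-zP'=1+sz-sz\equiv1$. Substituting into \eqref{eq:fppgen}, the first term vanishes and the second becomes $c/(8B^3)$. This is strictly positive because $c>0$ and $B>0$, which gives \eqref{eq:baseconvex}. A function with strictly positive second derivative on $\R$ is strictly convex there, so the base member is strictly convex.

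There is no real obstacle here. The only substantive input is that $B$ never vanishes, so the derivative formula is valid globally. This is precisely why $c=0$ is excluded: at $c=0$, $B=|A|$ vanishes where $P=0$, and the formula breaks down at the resulting kink.
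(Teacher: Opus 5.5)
Your proposal is correct and follows essentially the same route as the paper: $c>0$ gives $B>0$ on all of $\R$ (via Theorem~\ref{thm:domain}(i)), so the second term of \eqref{eq:fppgen} is non-negative and the first term carries the sign of $P''$, and at $m=0$ the substitutions $P''\equiv0$, $P-zP'\equiv1$ leave $c/(8B^3)>0$. One small correction: the contrapositive step needs only $f/(2B)\ge0$, not strict positivity, although strict positivity does hold here.
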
  
\begin{proof} 
    By Theorem~\ref{thm:domain}(i), $c>0$ gives $R=A^2+Q>0$ for $z\neq0$ and $R(0)=\tfrac14$; hence, $B>0$ everywhere, so the term $c(P-zP')^2/(8B^3)$ in \eqref{eq:fppgen} is non-negative. Since $f>0$ and $B>0$, the remaining term $fP''/(2B)$ carries the sign of $P''$. At $m=0$, $P=1+sz$ gives $P''\equiv0$ and $P-zP'=1$, leaving \eqref{eq:baseconvex}. 
\end{proof}  
Negative curvature of the shape at any strike, not only at the forward, therefore requires $P''<0$, which by \eqref{eq:Pderiv} can occur only through the amplitudes.

\subsection{Base-member conditions in closed forms}\label{sec:basemember}

At $m=0$, WSVI reduces to the three-parameter member \eqref{eq:fbase}, which is a reparameterization of the per-slice eSSVI form. The correspondence can be inverted explicitly, allowing known eSSVI conditions to be written directly in WSVI coordinates.

\begin{proposition}[Inverse correspondence]\label{prop:invcorr} For the base member $m=0$, assume $c>0$. Then the triplet $(\vartheta,\rho_*,\phi)$ of \eqref{eq:ssvimap} is
recovered from $(\sca,s,c)$ by
\begin{equation}\label{eq:invmap}
  \vartheta=\sca^2,\qquad
  \phi=\frac{2\sqrt{\chi}}{\sca},\qquad
  \rho_*=\frac{s}{2\sqrt{\chi}},\qquad
  \Psi:=\vartheta\phi=2\sca\sqrt{\chi}.
\end{equation}
The hypothesis $c>0$ is necessary and not merely convenient: since
$4\chi=s^2+2c$,
\begin{equation}\label{eq:rhosq}
  \rho_*^2\;=\;\frac{s^2}{s^2+2c},
\end{equation}
so $|\rho_*|<1$ iff $c>0$, with $|\rho_*|=1$ at $c=0$ and
$|\rho_*|>1$ for $c<0$. Assuming only $\chi>0$ is too weak: at
$s=-0.55$, $c=-0.05$ one has $\chi=+0.051>0$ yet $\rho_*^2=1.494$, and
\eqref{eq:invmap} returns $|\rho_*|=1.222$, which lies outside the admissible correlation range.
\end{proposition}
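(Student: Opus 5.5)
The plan is to solve the forward map \eqref{eq:ssvimap} for $(\vartheta,\rho_*,\phi)$ by elimination, and then read off the range condition on $\rho_*$ from \eqref{eq:rhosq}.

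\textbf{The level.} The first relation gives $\vartheta=\sigma_0^2T=\sca^2$ directly.

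\textbf{Eliminating $\rho_*$.} Substituting $\vartheta=\sca^2$ into the other two relations gives
\[
s=\rho_*\phi\sca,\qquad c=\tfrac12\phi^2\sca^2(1-\rho_*^2).
\]
I would form the combination that cancels $\rho_*$:
\[
\tfrac{s^2}{4}+\tfrac{c}{2}=\tfrac14\phi^2\sca^2\rho_*^2+\tfrac14\phi^2\sca^2(1-\rho_*^2)=\tfrac14\phi^2\sca^2 .
\]
Hence $\chi=\phi^2\sca^2/4$. Since $\phi>0$ and $\sca>0$, this forces $\phi=2\sqrt{\chi}/\sca$. It then follows that $\Psi=\vartheta\phi=2\sca\sqrt{\chi}$, and that $\rho_*=s/(\phi\sca)=s/(2\sqrt{\chi})$. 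For these divisions to make sense we need $\chi>0$. This holds whenever $c>0$, because $4\chi=s^2+2c>0$.

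\textbf{Consistency and the role of $c>0$.} Next I would verify the candidate solution. Squaring $\rho_*$ gives $\rho_*^2=s^2/(4\chi)=s^2/(s^2+2c)$, which is \eqref{eq:rhosq}. From this, $1-\rho_*^2=2c/(s^2+2c)$, so $|\rho_*|<1$ exactly when $c>0$. Substituting back, $\tfrac12\phi^2\sca^2(1-\rho_*^2)=2\chi\cdot 2c/(4\chi)=c$, so the recovered triple reproduces the original $(s,c)$ and the correspondence is a bijection onto the admissible eSSVI range.

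The boundary and failure cases follow from the same identity:
\begin{itemize}
\item At $c=0$ with $s\neq0$, we get $\rho_*^2=1$.
\item For $c<0$ with $\chi>0$, we have $s^2+2c<s^2$, so $\rho_*^2>1$.
\end{itemize}
This is why the hypothesis $c>0$ is necessary and not merely convenient.

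\textbf{Numerical example.} For $s=-0.55$, $c=-0.05$ I would compute $\chi=0.075625-0.025=0.050625$. Then $\rho_*^2=0.3025/0.2025\approx1.494$, giving $|\rho_*|\approx1.222$.

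\textbf{Expected obstacle.} The only delicate point is the case $s=0$ with $c>0$. There $\rho_*=0$ and $\phi=\sqrt{2c}/\sca$, which is still covered by the formulas above. Apart from this, the whole argument is elementary algebra.
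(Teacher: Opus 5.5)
Your proposal is correct and follows the paper's proof: both eliminate $\rho_*$ through $\chi=s^2/4+c/2=\phi^2\sca^2/4$ and then read off $\phi$, $\rho_*$ and $\Psi$. Your back-substitution check that the recovered triple reproduces $(s,c)$, and your caveat $s\neq0$ in the boundary case $c=0$ (where $\chi=0$ would otherwise make $\rho_*$ undefined), are sensible additions that the paper leaves implicit.
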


\begin{proof}
From \eqref{eq:ssvimap},
\[
  \chi=\frac{s^2}{4}+\frac{c}{2}
  =\frac{\rho_*^2\phi^2\vartheta}{4}+\frac{\phi^2\vartheta(1-\rho_*^2)}{4}
  =\frac{\phi^2\sca^2}{4},
\]
giving $\phi=2\sqrt{\chi}/\sca$; then
$\rho_*=s/(\phi\sqrt{\vartheta})=s/(2\sqrt{\chi})$ and
$\Psi=2\sca\sqrt{\chi}$.
\end{proof}

\begin{corollary}[The eSSVI correspondence requires $c>0$]\label{cor:cpos}
Since $|\rho_*|<1$, $\phi>0$ and $\vartheta>0$, the correspondence \eqref{eq:ssvimap} always gives $c>0$. Conversely, \eqref{eq:rhosq} shows that the correspondence exists only for $c>0$.
\end{corollary}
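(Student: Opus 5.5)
The plan is to read both directions of the corollary directly off the correspondence \eqref{eq:ssvimap} and its inverse in Proposition~\ref{prop:invcorr}. Nothing beyond these two identities is needed, since the statement is only about the sign of $c$ under the map.

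\emph{Forward direction.} Take any admissible eSSVI triple $(\vartheta,\rho_*,\phi)$ with $\vartheta>0$, $\phi>0$ and $\rho_*\in(-1,1)$. By \eqref{eq:ssvimap},
\[
c=\tfrac12\phi^2\vartheta(1-\rho_*^2),
\]
and each factor is strictly positive: $\phi^2>0$, $\vartheta>0$, and $1-\rho_*^2>0$ because $|\rho_*|<1$. Hence $c>0$. The other two components are also well defined, with $\sigma_0=\sqrt{\vartheta/T}>0$ and $s\in\R$, so the image of the correspondence lies in the admissible region $c>0$ of Corollary~\ref{cor:admissible}.

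\emph{Converse.} Suppose WSVI base-member coordinates $(\sca,s,c)$ arise from some admissible eSSVI triple. Then:
\begin{itemize}
\item \eqref{eq:ssvimap} forces $\vartheta=\sca^2$.
\item The computation in the proof of Proposition~\ref{prop:invcorr} gives $\chi=\phi^2\sca^2/4$, and this uses only \eqref{eq:ssvimap}, not the assumption $c>0$.
\item Combining $4\chi=s^2+2c$ with $s=\rho_*\phi\sca$ yields \eqref{eq:rhosq}, namely $\rho_*^2(s^2+2c)=s^2$.
\end{itemize}

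Now do a case split on the sign of $c$. Since $\phi>0$, we have $\chi>0$, so $s^2+2c>0$ and $\rho_*^2=s^2/(s^2+2c)$.
\begin{itemize}
\item If $c=0$, then $\rho_*^2=1$ (which forces $s\neq0$), contradicting $|\rho_*|<1$.
\item If $c<0$, then $s^2+2c<s^2$, so $\rho_*^2>1$ whenever $s\neq0$.
\item If $c<0$ and $s=0$, then $\chi=c/2<0$, contradicting $\chi=\phi^2\sca^2/4>0$.
\end{itemize}
Each case with $c\le0$ is excluded, so the correspondence exists only for $c>0$.

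The step needing care is this converse case analysis, specifically the degenerate case $s=0$. There \eqref{eq:rhosq} cannot be used directly, and the contradiction has to come from positivity of $\chi$ instead.
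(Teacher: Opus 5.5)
Your proof is correct and follows the paper's own argument. The forward direction is the positivity of $c=\tfrac12\phi^2\vartheta(1-\rho_*^2)$, and the converse rests on \eqref{eq:rhosq}; your case split, in particular the $s=0$ case settled via $\chi=\phi^2\vartheta/4>0$, fills in degenerate cases that the paper's one-line appeal to \eqref{eq:rhosq} passes over.

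One small observation: as you phrase the converse (coordinates that arise from an admissible triple), it already follows from the forward identity. The paper's use of \eqref{eq:rhosq} says something else: the inverse formula \eqref{eq:invmap}, applied to a point with $c\le0$, either returns $|\rho_*|\ge1$ or is undefined.
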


The per-slice eSSVI family therefore lies strictly inside the WSVI domain. Moreover, by Corollary~\ref{cor:convexity} the base member is strictly convex on all of $\R$: it cannot dip at any strike, not merely at the forward. That additional expressiveness comes from the WSVI amplitudes rather than from allowing $c<0$.

Two conditions for that slice are standard \cite{GatheralJacquier2014}: an asymptotic necessary condition on the wing slope and a sufficient condition for the absence of butterfly arbitrage. In the form $\Psi(1+|\rho_*|)\le4$ and
$\Psi^2(1+|\rho_*|)\le4\vartheta$ respectively, substituting
\eqref{eq:invmap} gives
\begin{align}
  \text{wing slope:}&\qquad
  \sca\left(\sqrt{\chi}+\frac{|s|}{2}\right)\;\le\;2, \label{eq:wingdiag}\\[2pt]
  \text{butterfly (sufficient):}&\qquad
  \chi+\frac{|s|\sqrt{\chi}}{2}\;\le\;1. \label{eq:butterflydiag}
\end{align}
Equation~\eqref{eq:wingdiag} is the same statement as
$\sca\max(C_+,C_-)\le2$ with $C_{\pm}$ from \eqref{eq:wingslopes}; it is
the wing-slope envelope Remark~\ref{rem:reality} refers to, now with its
constant. Necessary and sufficient no-butterfly-arbitrage conditions for
this three-parameter slice are given by Klassen \cite{Klassen2016}, in
normalized-strike coordinates; the exact butterfly domain of the
five-parameter SVI slice, and a global arbitrage-free parametrization of
eSSVI surfaces, are given by Martini and Mingone
\cite{MartiniMingone2022,Mingone2022}.

\begin{proposition}[The wing condition holds at every amplitude count]\label{prop:wingallm}
Fix any $m\ge0$, any members, and any amplitudes, and let $c>0$. Then
\eqref{eq:wingdiag} is necessary for the absence of butterfly arbitrage, in the form
\[
  \sca\max(C_+,C_-)\;\le\;2 .
\]
\end{proposition}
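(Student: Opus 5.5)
The plan is to reduce the statement to Lee's moment formula, the tool invoked in Remark~\ref{rem:reality}, combined with the linear-wing asymptotics of Proposition~\ref{prop:linwings}. Absence of butterfly arbitrage means that call prices are convex and decreasing in $K$ with the correct limits. This makes them the prices of a genuine risk-neutral law of $F_T$ with mean $F$. Lee's theorem \cite{Lee2004} then applies to any such law. It gives $\limsup_{k\to\pm\infty} w(k)/|k|\le 2$, because the moment-formula quantity $\beta_{R},\beta_L\in[0,2]$ equals that $\limsup$ on each wing. I would apply this directly and not rely on the eSSVI-specific derivation, because the amplitudes change the interior but not the $\limsup$.

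The steps run as follows. First, since $c>0$ we have $\chi=s^2/4+c/2>0$, so Proposition~\ref{prop:linwings} applies for every $m$ and every choice of members and amplitudes. This rests on the boundedness of the basis (Proposition~\ref{prop:asymconst}) and yields $w(k)=\sca C_\pm|k|+O(\sca^2)$. Second, I divide by $|k|$ and let $k\to\pm\infty$. The limit exists and equals $\sca C_\pm$, so Lee's right-wing bound gives $\sca C_+\le 2$ and his left-wing bound gives $\sca C_-\le 2$. Together these give $\sca\max(C_+,C_-)\le2$. Third, I note that $\max(C_+,C_-)=\sqrt\chi+|s|/2$, which identifies the result with \eqref{eq:wingdiag}. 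Nothing in the argument depends on $a$, $\lambda_j$, $n_j$ or $\varepsilon_j$, and this is the point of the proposition.

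The main obstacle is stating precisely which hypothesis is used, namely that no butterfly arbitrage yields a bona fide distribution to which Lee's theorem applies. Lee's bound uses only finiteness of the first moment, which holds since the mean is $F$, and the existence of a density of $\ln(F_T/F)$. If one wants to avoid quoting Lee, the fallback is a direct argument. Suppose $\sca C_+>2$. Then $d_2=-k/\sqrt w-\sqrt w/2$ tends to $-\infty$, while $d_1=-k/\sqrt w+\sqrt w/2\to+\infty$ because $\sqrt w/2$ dominates $k/\sqrt w$ when $w/|k|>2$. The normalized call price $\Ndist(d_1)-e^{k}\Ndist(d_2)$ would then tend to $1$ rather than to $0$ as $K\to\infty$. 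This contradicts the requirement that call prices vanish at infinite strike, which is part of the no-arbitrage (spread) conditions. The left wing is handled symmetrically through the put price $e^{k}\Ndist(-d_2)-\Ndist(-d_1)$, which behaves as $\sim e^{k}\cdot 1$ rather than $o(e^{k})$. I would present this elementary argument as the proof and cite Lee for context.
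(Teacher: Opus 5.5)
Your opening plan is the paper's proof: Proposition~\ref{prop:asymconst} fixes $C_\pm$ from $(s,c)$ alone, so \eqref{eq:kwings} holds at every $m$, Lee's formula caps both slopes at $2$, and $\max(C_+,C_-)=\sqrt{\chi}+|s|/2$. The argument you choose to present is different. It is a direct Black-formula computation and more self-contained: instead of importing the moment formula, it shows which price limit breaks. For $\sca C_+>2$ the normalized call tends to $1$; for $\sca C_->2$ the normalized put is asymptotic to $e^{k}$. It is correct, but two points need tightening.

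First, $e^{k}\Ndist(d_2)$ is an indeterminate product of the form $\infty\cdot 0$. Close it with the identity $e^{k}\ndens(d_2)=\ndens(d_1)$ and the Mills bound. These give $e^{k}\Ndist(d_2)\le\ndens(d_1)/|d_2|\to0$ on the right and $\Ndist(-d_1)\le e^{k}\ndens(d_2)/d_1=o(e^{k})$ on the left.

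Second, your contradiction is with the limits $V_{\mathrm{c}}\to0$ and $V_{\mathrm{p}}=o(K)$, which you obtain by building monotonicity and correct limits into your reading of the hypothesis. In the paper, absence of butterfly arbitrage means only $g\ge0$. Remark~\ref{rem:btfly_impl_sprd_glb} in fact obtains the limits that make $1-\Cdf$ a genuine distribution function from the wing condition itself. The paper's appeal to Lee has the same looseness, but your argument can avoid it using convexity alone:
\begin{itemize}
\item \emph{Right wing.} $g\ge0$ on $\R$ makes $V_{\mathrm{c}}$ convex in $K$, and the Black price satisfies $V_{\mathrm{c}}<e^{-rT}F$. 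A convex function bounded above on $(0,\infty)$ cannot increase anywhere, so $V_{\mathrm{c}}$ is non-increasing. A limit $e^{-rT}F$ as $K\to\infty$ would then force $V_{\mathrm{c}}\ge e^{-rT}F$ everywhere, which is impossible.
\item \emph{Left wing.} Convexity with $V_{\mathrm{p}}(0^+)=0$ makes $V_{\mathrm{p}}(K)/K$ non-decreasing. So $V_{\mathrm{p}}\sim e^{-rT}K$ as $K\to0$ would force $V_{\mathrm{p}}\ge e^{-rT}K$, contradicting $e^{k}\Ndist(-d_2)-\Ndist(-d_1)<e^{k}$.
\end{itemize}

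A shorter route exists in the paper's own coordinates. By \eqref{eq:rayP}, \eqref{eq:fppgen} and the decay of $\varphi_j'$ and $\varphi_j''$, in the wings one has $\Ray\to\tfrac12$, $w'^2\to(\sca C_\pm)^2$, $w''\to0$ and $w'^2/w\to0$. Hence $g\to\tfrac14-(\sca C_\pm)^2/16$, the limit quoted for Figure~\ref{fig:wshape}, and this is negative exactly when $\sca C_\pm>2$.
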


\begin{proof}
By Proposition~\ref{prop:asymconst}, the wing slopes \eqref{eq:wingslopes} are
determined by $(s,c)$ alone, so \eqref{eq:kwings} holds with the same $C_\pm$
at every amplitude count. Lee's moment formula \cite{Lee2004} bounds both
asymptotic slopes of $w$ in $|k|$ by $2$, giving $\sca C_\pm\le2$. Finally,
$\sca\max(C_+,C_-)\le2$ is \eqref{eq:wingdiag}, since
$\max(C_+,C_-)=\sqrt{\chi}+|s|/2$.
\end{proof}

Equation \eqref{eq:wingdiag} was obtained above through the correspondence
\eqref{eq:invmap}, which exists only at $m=0$; Proposition~\ref{prop:wingallm}
shows that the resulting condition is not confined to the base member. It is the one no-arbitrage condition in this paper that is uniform in $m$, and it is so for the same reason the wings are: the amplitudes are bounded.

\begin{remark}[Scope of the diagnostics]\label{rem:diagscope}
Condition \eqref{eq:butterflydiag}, by contrast, is a property of the
\emph{three-parameter} member alone. For $m>0$ the amplitudes alter $P$,
hence $f''$, hence the density factor, so it is neither necessary nor
sufficient: $\chi$ and $s$ no longer determine the shape.
It is a per-slice diagnostic, not a substitute for the pointwise
conditions of Section~\ref{sec:arb}, which are what any use of the family
must enforce.
\end{remark}

\section{Static arbitrage conditions in shape coordinates}\label{sec:arb}

The arbitrage conditions in this section apply to any positive, twice-differentiable total-variance function $w(k)$. We invoke the WSVI factorization only when translating those conditions into shape coordinates.

\subsection{Butterfly: the density factor}\label{sec:butterfly}

Absence of butterfly arbitrage is the requirement that the implied risk-neutral density be non-negative. The following two definitions split that density into a factor carrying its sign and a factor that is positive wherever $w>0$, so that the condition reduces to a statement about the first alone.

\begin{definition}[Density factor]\label{def:densityfactor}
The \emph{density factor} is the dimensionless combination of $w$, $w'$, and $w''$ given by
\begin{equation}\label{eq:gdef}
    g(k)\;:=\;\left(1-\frac{k\,w'(k)}{2\,w(k)}\right)^{\!2}
    -\frac{w'(k)^2}{4}\left(\frac{1}{w(k)}+\frac14\right)
    +\frac{w''(k)}{2}.
\end{equation}
\end{definition}
\begin{definition}[Density weight]\label{def:densityweight}
The \emph{density weight} is the Black kernel at the strike,
\begin{equation}\label{eq:psidef}
    \psi(k)\; := \;\frac{1}{\sqrt{w(k)}}\exp\!\big(-\tfrac12 d_2(k)^2\big),
    \qquad
    d_2(k) = -\frac{k}{\sqrt{w(k)}}-\frac{\sqrt{w(k)}}{2}.
\end{equation}
\end{definition}
The implied density is the product of the two. The Breeden--Litzenberger density in log-moneyness, its factorization into $g$ and $\psi$, and the non-negativity condition \eqref{eq:butterflycond} are well known:
\begin{proposition}[Risk-neutral density \cite{BreedenLitzenberger1978,GatheralJacquier2014}]\label{prop:density}
    The density of $\ln(K/F)$ implied by the smile is
    \begin{equation}\label{eq:density}
      \dens(k)\;=\;\frac{1}{\sqrt{2\pi}}\,g(k)\,\psi(k).
    \end{equation}
    Absence of butterfly arbitrage on an interval is equivalent to $\dens\ge0$ there and, since $\psi>0$ wherever $w>0$, to
    \begin{equation}\label{eq:butterflycond}
      g(k)\;\ge\;0 .
    \end{equation}
\end{proposition}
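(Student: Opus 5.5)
The plan is to start from the Breeden--Litzenberger identity and turn it into a computation in $k$. With undiscounted call price $c(K):=e^{rT}V_{\mathrm{c}}(K)$, the density of $S_T$ is $\partial_K^2 c$. Because $k=\ln(K/F)$, the density of $\ln(S_T/F)$ at $k$ is $K\,\partial_K^2 c$ evaluated at $K=Fe^{k}$. Writing $c=F\,\Gamma(k,w(k))$ with $\Gamma(k,w)=\Ndist(d_1)-e^{k}\Ndist(d_2)$ and using $\partial_K=K^{-1}\partial_k$, we get $K^2\partial_K^2 c=F(\Gamma_{kk}^{\mathrm{tot}}-\Gamma_k^{\mathrm{tot}})$, where the superscript denotes total $k$-derivatives along $w=w(k)$. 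So the density of $k$ is $e^{-k}(\Gamma_{kk}^{\mathrm{tot}}-\Gamma^{\mathrm{tot}}_k)$. The whole proof then reduces to expanding this combination.

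The expansion relies on the standard Black identities. First, $e^{k}\ndens(d_2)=\ndens(d_1)$. Second, $\partial_k\Gamma=-e^{k}\Ndist(d_2)$, the dual delta. Third, $\partial_w\Gamma=\tfrac{1}{2\sqrt w}\ndens(d_1)$, the vega in total variance. By the chain rule,
\[
\Gamma_k^{\mathrm{tot}}=-e^{k}\Ndist(d_2)+\frac{w'}{2\sqrt w}\ndens(d_1).
\]
Differentiating once more, subtracting $\Gamma_k^{\mathrm{tot}}$, and converting $\ndens(d_1)=e^{k}\ndens(d_2)$ cancels the $\Ndist(d_2)$ terms. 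What remains is $e^{k}\ndens(d_2)$ times a polynomial in $w'$, $w''$, $k/w$ and $1/w$. This uses $\partial_k d_2=-1/\sqrt w$ and $\partial_w d_2=\tfrac{k}{2w^{3/2}}-\tfrac{1}{4\sqrt w}$. After multiplying by $e^{-k}$, the prefactor becomes $\ndens(d_2)/\sqrt w=\psi/\sqrt{2\pi}$.

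The main obstacle is the bookkeeping. We have to check that the bracket collects exactly into $g$ of \eqref{eq:gdef}. I would organize it by powers of $w'$:
\begin{itemize}
\item the $w''$ term gives $w''/2$;
\item the terms linear in $w'$ give $-kw'/w$;
\item the terms quadratic in $w'$ give $\tfrac{k^2w'^2}{4w^2}-\tfrac{w'^2}{4w}-\tfrac{w'^2}{16}$;
\item the constant term gives $1$.
\end{itemize}
This is precisely $\big(1-\tfrac{kw'}{2w}\big)^2-\tfrac{w'^2}{4}\big(\tfrac1w+\tfrac14\big)+\tfrac{w''}{2}$. As a cross-check, the paper attributes this form to Gatheral--Jacquier \cite{GatheralJacquier2014}, so we may simply cite it once the identification is confirmed.

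For the equivalence, butterfly arbitrage on an interval means that $K\mapsto c(K)$ fails to be convex there. With $w$ twice differentiable this is the same as $\partial_K^2c<0$ somewhere, which in turn is $\dens<0$ because $K>0$. Finally, $\psi$ is an exponential divided by $\sqrt w$, so $\psi>0$ wherever $w>0$. Hence $\dens\ge0$ if and only if $g\ge0$, which is \eqref{eq:butterflycond}.
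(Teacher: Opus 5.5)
Your proposal is correct. The bracket collapses exactly to $g$: the factors $-\sqrt{w}\,\frac{\mathrm{d}d_2}{\mathrm{d}k}=1-\frac{kw'}{2w}+\frac{w'}{4}$ and $1+\frac{w'}{2\sqrt{w}}\,d_2=1-\frac{kw'}{2w}-\frac{w'}{4}$ multiply to $\big(1-\frac{kw'}{2w}\big)^2-\frac{w'^2}{16}$, differentiating the coefficient $w'/(2\sqrt{w})$ of the vega term supplies $\frac{w''}{2}-\frac{w'^2}{4w}$, and the prefactor is $\psi/\sqrt{2\pi}$. The paper gives no proof of its own and simply cites Breeden--Litzenberger and Gatheral--Jacquier, whose standard argument (differentiate the Black formula twice in strike along $w=w(k)$, then use convexity of call prices in $K$) is the one you reconstruct.
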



In shape coordinates, substituting \eqref{eq:derivid} into \eqref{eq:gdef},
\begin{equation}\label{eq:gshape}
  g\;=\;\left(1-\frac{zf'}{2f}\right)^{\!2}
  -\frac{f'^2}{4}\left(\frac{1}{f}+\frac{\sca^2}{4}\right)+\frac{f''}{2},
\end{equation}
which exhibits a fact used below: $g$ is \emph{level-dependent}, and the dependence is affine. Collecting the one term that carries the level,
\begin{equation}\label{eq:gaffine}
  g\;=\;g_0(z)\;-\;\frac{\theta}{16}\,f'(z)^2,
  \qquad
  g_0(z)\;:=\;\left(1-\frac{zf'}{2f}\right)^{\!2}-\frac{f'^2}{4f}+\frac{f''}{2},
\end{equation}
where $g_0$ depends on the shape alone. Thus $g$ is non-increasing in $\theta$ at fixed shape, with slope $-f'^2/16$: a shape satisfying \eqref{eq:butterflycond} at one level satisfies it at every smaller one, and can fail only as the level rises. In particular, freedom from butterfly arbitrage need only be verified at the largest level a shape is served at; the precise statement follows Theorem~\ref{thm:sharedshape}.

\begin{corollary}[At-the-forward curvature budget]\label{cor:at-the-forwardbudget} At $z=0$, using Proposition~\ref{prop:at-the-forward} in \eqref{eq:gshape}, \begin{equation}\label{eq:g0} g(0)\;=\;1+\frac{f''(0)}{2} -\frac{s^2}{4}\left(1+\frac{\sca^2}{4}\right), \end{equation}
so the butterfly condition at the forward is \begin{equation}\label{eq:g0cond} f''(0)\;\ge\;\frac{s^2}{2}\left(1+\frac{\sca^2}{4}\right)-2 . \end{equation} \end{corollary}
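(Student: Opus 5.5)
The plan is a direct evaluation of the shape-coordinate density factor \eqref{eq:gshape} at $z=0$, followed by solving a linear inequality in $f''(0)$. No new estimates are needed. The work is to confirm that \eqref{eq:gshape} is a faithful rewriting of \eqref{eq:gdef}, and then to substitute the at-the-forward values of Proposition~\ref{prop:at-the-forward}.

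First, I will recheck \eqref{eq:gshape} from \eqref{eq:gdef} using \eqref{eq:derivid}.
\begin{itemize}
\item The ray term gives $kw'/(2w)=(\sca z)(\sca f')/(2\sca^2 f)=zf'/(2f)$.
\item The quadratic term gives $w'^2/4=\sca^2 f'^2/4$. Multiplying by $1/w+1/4=1/(\sca^2 f)+1/4$ yields $(f'^2/4)\bigl(1/f+\sca^2/4\bigr)$.
\item The curvature term is $w''/2=f''/2$, because the second derivative carries no factor of $\sca$.
\end{itemize}
Since $c>0$ is assumed throughout this section, Corollary~\ref{cor:admissible} gives $f>0$ and $B>0$. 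Hence $f$ is twice differentiable and all these expressions are well defined at $z=0$.

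Next, I will evaluate at $z=0$. The factor $z$ kills the ray correction, so the first term equals $1$ regardless of $f'(0)$. By Proposition~\ref{prop:at-the-forward}, $f(0)=1$ and $f'(0)=s$. The middle term therefore becomes $(s^2/4)(1+\sca^2/4)$, and the last term is $f''(0)/2$. This is exactly \eqref{eq:g0}.

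Finally, Proposition~\ref{prop:density} says butterfly absence at the forward means $g(0)\ge0$, since $\psi(0)>0$ when $w(0)=\theta>0$. Multiplying $g(0)\ge0$ by $2$ and rearranging gives \eqref{eq:g0cond}.

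There is no genuine obstacle. The only care point is the bookkeeping of powers of $\sca$ in the middle term, where the $\sca^2$ cancels in the $1/w$ part but survives in the constant $1/4$ part. As a consistency check, I will confirm the result against the affine split \eqref{eq:gaffine} with $\theta=\sca^2$.

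If desired, one can also substitute \eqref{eq:fpp0} to express the bound directly in terms of $c$ and the order-one amplitudes. That rewriting is immediate.
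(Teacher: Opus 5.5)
Your proposal is correct and matches the paper's argument: substitute $f(0)=1$ and $f'(0)=s$ from Proposition~\ref{prop:at-the-forward} into \eqref{eq:gshape} at $z=0$, where the ray term reduces to $1$, and then rearrange $g(0)\ge0$ to get \eqref{eq:g0cond}. Your recheck of \eqref{eq:gshape} from \eqref{eq:gdef} and your consistency check against \eqref{eq:gaffine} with $\theta=\sca^2$ go beyond what the paper writes out, but they are harmless.
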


Equation~\eqref{eq:g0cond} gives the exact at-the-forward curvature budget. Negative curvature is compatible with the butterfly condition, but only above a lower bound determined by the skew and the level. Larger skew or larger total variance raises this bound.

Figure~\ref{fig:wshape} gives an explicit member with $f''(0)<0$, a W-shaped smile, and a bimodal implied density. The parameters were chosen by hand, and the plotted range satisfies both the butterfly and spread conditions.

\begin{figure}[ht!]
\centering
\includegraphics[width=\textwidth]{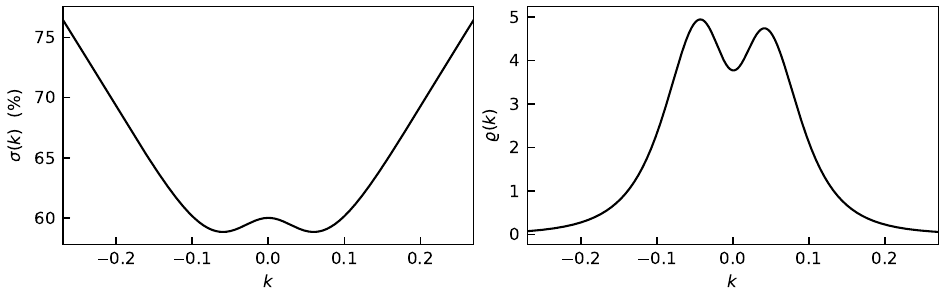}
\caption{A member with negative at-the-forward curvature, strictly inside the domain. Parameters: $\sigma_0=0.60$, $T=5/252$, $s=0$, $c=0.3$, and $m=2$ order-one basis members, one per side ($n_j=1$, $\lambda_j=1$, $\varepsilon_j=\pm1$), each with amplitude $a_j=-0.35$, so that $f''(0)=c+\sum_j a_j/\lambda_j^2=-0.40<0$ while $c>0$, and $g(0)=1+f''(0)/2=0.80$. At these settings the odd parts of the two members cancel and their sum is $-0.35\,e(z)$, so the perturbation is even; the pair is used rather than a single member because the two sides are not tied in general. Left: implied volatility $\sigma(k)$. Right: implied density \eqref{eq:density}, bimodal with modes near $k\approx\pm0.04$. The density factor satisfies $g>0$ on all of $\R$, with $g\to\tfrac14-(\sca C_{\pm})^2/16=0.2499$ in the wings: the member carries no butterfly arbitrage at any strike, and since $\sca C_{\pm}=0.033<2$, Remark~\ref{rem:btfly_impl_sprd_glb} gives the spread conditions at every strike as well. No eSSVI slice can produce this shape as the base member is strictly convex for every admissible parameter choice (Corollary~\ref{cor:convexity}).}
\label{fig:wshape}
\end{figure}

\subsection{Spread: the risk-neutral survival function}\label{sec:spread}
Where the butterfly condition constrains the second derivative of the option-price curve in strike, the spread conditions constrain its first. The corresponding object is the implied survival function.

\begin{definition}[Implied survival function]\label{def:cdf}
    \begin{equation}\label{eq:cdfdef}
        \Cdf(k) \; := \; \Ndist \big(d_2(k)\big) - \ndens \big(d_2(k)\big)\frac{w'(k)}{2\sqrt{w(k)}}.
    \end{equation}
\end{definition}
At fixed volatility, $\Ndist(d_2)$ is the Black probability that the underlying finishes above the strike; with the correction term, $\Cdf$ is the smile-implied survival function: the implied distribution function of $k$ is $1-\Cdf$, and $\Cdf'=-\dens$ by \eqref{eq:density}. The first term of \eqref{eq:cdfdef} is the Black survival function at fixed volatility. The second term accounts for the strike dependence of implied volatility. If the correction is sufficiently large, the implied survival function can leave $[0,1]$.
\begin{proposition}[Spread condition \cite{GatheralJacquier2014}]\label{prop:spread}
    With $V_{\mathrm{c}}$ as in Definition~\ref{def:black} and $k=\ln(K/F)$,
    $\partial_K V_{\mathrm{c}}=-e^{-rT}\Cdf(k)$ and
    $\partial_K V_{\mathrm{p}}=e^{-rT}\big(1-\Cdf(k)\big)$.
    Absence of call- and put-spread arbitrage on an interval of $k$ is therefore equivalent to
    \begin{equation}\label{eq:spreadcond}   
        0\;\le\;\Cdf(k)\;\le\;1 .
    \end{equation}
    Violation below zero is call-spread arbitrage; violation above one is put-spread arbitrage.
\end{proposition}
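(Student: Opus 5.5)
The plan is to differentiate the Black call price of Definition~\ref{def:black} with respect to $K$ directly, treating the total variance as $w(k(K))$ with $k=\ln(K/F)$. The equivalence with \eqref{eq:spreadcond} then follows from monotonicity of call and put prices in strike.

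\textbf{Partial derivatives at fixed $w$.} Write $V_{\mathrm{c}}=e^{-rT}[F\Ndist(d_1)-K\Ndist(d_2)]$ and split $\partial_K V_{\mathrm{c}}$ into the explicit dependence on $K$ at fixed $w$ and the dependence through $w$. The standard identity $F\ndens(d_1)=K\ndens(d_2)$ drives everything. It follows from $d_1^2-d_2^2=(d_1+d_2)(d_1-d_2)=(-2k/\sqrt w)\sqrt w=-2k$, so $\ndens(d_1)=\ndens(d_2)e^{k}=\ndens(d_2)K/F$.

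At fixed $w$, the $K$-derivative is $F\ndens(d_1)\partial_K d_1-\Ndist(d_2)-K\ndens(d_2)\partial_K d_2$. Since $\partial_K d_1=\partial_K d_2$, the identity cancels the two density terms and leaves $-\Ndist(d_2)$.

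\textbf{The vega term.} The derivative through $w$ is the Black vega in total variance times $\mathrm{d}w/\mathrm{d}K=w'(k)/K$. Using $\partial_w d_1-\partial_w d_2=1/(2\sqrt w)$ and the same identity,
\[
\partial_w\big[F\Ndist(d_1)-K\Ndist(d_2)\big]=K\ndens(d_2)\,\frac{1}{2\sqrt w}.
\]
Multiplying by $w'/K$ gives $\ndens(d_2)\,w'/(2\sqrt w)$. Adding the two pieces and restoring the discount factor,
\[
\partial_K V_{\mathrm{c}}=-e^{-rT}\Big[\Ndist(d_2)-\ndens(d_2)\,\frac{w'}{2\sqrt w}\Big]=-e^{-rT}\Cdf(k).
\]
Differentiating put--call parity $V_{\mathrm{p}}=V_{\mathrm{c}}-e^{-rT}(F-K)$ then gives $\partial_K V_{\mathrm{p}}=-e^{-rT}\Cdf+e^{-rT}=e^{-rT}(1-\Cdf)$.

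\textbf{Equivalence with \eqref{eq:spreadcond}.} Absence of call-spread arbitrage on an interval means $V_{\mathrm{c}}$ is non-increasing in $K$ there: buying the lower strike and selling the higher strike must not have negative cost. Absence of put-spread arbitrage means $V_{\mathrm{p}}$ is non-decreasing. For a differentiable $w$ these are equivalent to $\partial_K V_{\mathrm{c}}\le0$ and $\partial_K V_{\mathrm{p}}\ge0$ pointwise on the interval, by the mean value theorem in one direction and by difference quotients in the other. The first is $\Cdf\ge0$ and the second is $\Cdf\le1$, which also identifies which violation corresponds to which arbitrage.

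The only delicate step is the cancellation via $F\ndens(d_1)=K\ndens(d_2)$, needed both in the fixed-$w$ derivative and in the vega. The rest is bookkeeping, with $w>0$ and differentiable as assumed throughout the section.
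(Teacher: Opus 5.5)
Your proposal is correct. The paper gives no proof of Proposition~\ref{prop:spread}: it records the result as standard and attributes it to \cite{GatheralJacquier2014}. So your argument does not compete with a different route; it supplies the computation the citation stands in for.

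Both pieces of your computation are right. Since $d_1-d_2=\sqrt{w}$ and $d_1^2-d_2^2=-2k$, the identity $F\ndens(d_1)=K\ndens(d_2)$ cancels the density terms in the fixed-$w$ derivative, leaving $-\Ndist(d_2)$. The same identity reduces the total-variance vega to $K\ndens(d_2)/(2\sqrt{w})$, and multiplying by $w'/K$ gives exactly the correction term in $\Cdf$. Put--call parity then gives the put derivative.

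Compared with the citation, your version is self-contained and makes explicit what is used. You need $w>0$ and $w$ differentiable in $k$. You also need that $K=Fe^{k}$ is increasing, so an interval in $k$ is an interval in $K$ and the mean-value argument can be run in $K$.

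One point you could state explicitly in the equivalence step. A call spread also has an upper no-arbitrage bound, $V_{\mathrm{c}}(K_1)-V_{\mathrm{c}}(K_2)\le e^{-rT}(K_2-K_1)$ for $K_1<K_2$. By put--call parity this is exactly monotonicity of $V_{\mathrm{p}}$. Hence your two monotonicity conditions cover all the vertical-spread constraints. Your assignment of $\Cdf<0$ to call-spread arbitrage and $\Cdf>1$ to put-spread arbitrage matches the paper's.
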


\begin{remark}[Butterfly implies spread only globally]\label{rem:btfly_impl_sprd_glb}
Since $\Cdf'=-\dens$, the condition $g\ge0$ is exactly the statement that $\Cdf$ is non-increasing. Under the factorization \eqref{eq:factorization} with $c>0$ and $\sca\max(C_+,C_-)<2$, the wings \eqref{eq:kwings} give $\Cdf(-\infty)=1$ and $\Cdf(+\infty)=0$ (at $\sca C_-=2$ exactly, $d_2$ stays bounded as $k\to-\infty$ and the first limit fails), so a non-increasing $\Cdf$ satisfies \eqref{eq:spreadcond} everywhere: $g\ge0$ on all of $\R$ implies the spread conditions. On a bounded interval, the two limits are unavailable. Monotonicity of $\Cdf$ there constrains its variation but not its level, so $g\ge0$ on the interval does not imply \eqref{eq:spreadcond} there. Checking one condition on a finite interval does not remove the need to check the other.
\end{remark}

\subsection{The ray condition and level sensitivity}\label{sec:ray}

To study how a fixed shape responds to changes in its level, define the following factor. It is named for the rays $\theta\mapsto(\theta,z\sqrt{\theta}\,)$ of constant normalized strike, along which the shape is held fixed and only the level varies. 

\begin{definition}[Ray factor]\label{def:rayfactor}
\begin{equation}\label{eq:raydef}
  \Ray(k)\;:=\;1-\frac{k\,w'(k)}{2\,w(k)}.
\end{equation}
\end{definition}
\begin{proposition}[Level-freeness]\label{prop:levelfree}
Under the factorization \eqref{eq:factorization},
\begin{equation}\label{eq:rayshape}
  \Ray\;=\;1-\frac{z\,f'(z)}{2\,f(z)},
\end{equation}
a function of $z$ alone; it depends on the shape and not on the level.
\end{proposition}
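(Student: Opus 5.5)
The plan is a direct substitution of the derivative identities \eqref{eq:derivid} into the definition \eqref{eq:raydef}. Under the factorization \eqref{eq:factorization}, $w=\sca^2 f(z)$ and $w'=\sca f'(z)$, with $z=k/\sca$. Only $w$ and $w'$ appear in $\Ray$, so the second-derivative identity is not needed.

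The key step is to write $k=\sca z$ and compute the ratio
\[
\frac{k\,w'(k)}{2\,w(k)}=\frac{\sca z\cdot \sca f'(z)}{2\,\sca^2 f(z)}=\frac{z f'(z)}{2 f(z)},
\]
where the factor $\sca^2$ cancels exactly. This cancellation is what the statement asserts. The term $kw'/w$ is homogeneous of degree zero under the scaling $k\mapsto \sca z$, $w\mapsto\sca^2 f$: the numerator carries one $\sca$ from $k$ and one from $w'$, matching the $\sca^2$ in $w$.

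The only point needing care is well-definedness. The division by $f$ is legitimate because we work in the admissible region $c>0$, where Corollary~\ref{cor:admissible} gives $f$ smooth and strictly positive on all of $\R$, so $w>0$ as well. This ensures \eqref{eq:rayshape} holds pointwise at every $z$ and not merely where $f\neq0$.

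To conclude that $\Ray$ depends on the shape and not on the level, note that the right-hand side of \eqref{eq:rayshape} involves only $f$, $f'$ and $z$. These are determined by the shape parameters $\bpi=(s,c,a)$ together with the fixed basis, and contain no $\sigma_0$, $T$ or $\theta$. Along a ray of constant $z$, varying $\theta=\sca^2$ therefore leaves $\Ray$ unchanged.

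I expect no real obstacle. The one subtlety is to state precisely what "level-free" means: it is a function of $z$, not of $k$. As a function of $k$, $\Ray(k)=1-\tfrac{(k/\sca)f'(k/\sca)}{2f(k/\sca)}$ does depend on $\sca$. The proof should say explicitly that the claim is about invariance at fixed normalized strike.
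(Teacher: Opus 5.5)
Your proposal is correct and matches the paper's proof: substitute $kw'=\sca^2 zf'$ and $2w=2\sca^2 f$ from \eqref{eq:derivid} and cancel $\sca^2$. Appealing to $c>0$ for well-definedness does no harm but is unnecessary, because $w>0$ is part of Definition~\ref{def:totvar}, so $f=w/\sca^2>0$ wherever $\Ray$ is defined.
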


\begin{proof}
By \eqref{eq:derivid}, $kw'=\sca^2 zf'$, and $2w=2\sca^2 f$.
\end{proof}

The ray factor also admits a closed form in the shape constituents $P$ and $B$ of \eqref{eq:fdef}; no derivative of $f$ appears in it.

\begin{proposition}[The ray factor in shape constituents]\label{prop:rayP}
Let $c>0$. Then
\begin{equation}\label{eq:rayP}
  \Ray\;=\;\frac12+\frac{P(z)-zP'(z)}{4B(z)} .
\end{equation}
In particular $\Ray>\tfrac12$ if and only if $P>zP'$, and $\Ray\ge0$ if and only if $P-zP'\ge-2B$.
\end{proposition}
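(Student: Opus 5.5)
We start from the level-free form \eqref{eq:rayshape}, $\Ray=1-zf'/(2f)$, given by Proposition~\ref{prop:levelfree}, and rewrite $zf'/f$ in terms of the constituents $P$ and $B$. For $c>0$, Corollary~\ref{cor:admissible} gives $B>0$ and $f>0$ on all of $\R$, so every quotient below is well defined.

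The first step computes $f'$ from \eqref{eq:fABderiv} and \eqref{eq:Bprime}. With $Q'=cz$, this gives
\[
f'=A'+\frac{AA'+\tfrac12 cz}{B}=\frac{A'(A+B)+\tfrac12 cz}{B}=\frac{A'f+\tfrac12 cz}{B}.
\]
Hence $zf'/f=zA'/B+cz^2/(2Bf)$.

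The second step removes the quotient by $f$. From $B^2=A^2+\tfrac12 cz^2$ we get $\tfrac12 cz^2=(B-A)(B+A)=(B-A)f$, so $cz^2/(2Bf)=(B-A)/B=1-A/B$. Substituting,
\[
\frac{zf'}{f}=1+\frac{zA'-A}{B},\qquad
\Ray=1-\frac{zf'}{2f}=\frac12+\frac{A-zA'}{2B}.
\]
Replacing $A=\tfrac12P$ and $A'=\tfrac12P'$ then gives \eqref{eq:rayP}. This step is the only real obstacle: one has to spot the factorization $Q=(B-A)(B+A)$, which also explains why no derivative of $f$ survives in the final formula.

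The two equivalences follow at once because $B>0$. First, $\Ray>\tfrac12$ holds iff $(P-zP')/(4B)>0$, that is, iff $P>zP'$. Second, $\Ray\ge0$ holds iff $(P-zP')/(4B)\ge-\tfrac12$, and multiplying by $4B>0$ turns this into $P-zP'\ge-2B$.
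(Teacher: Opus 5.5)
Your proof is correct and takes essentially the same route as the paper. Both compute $f'$ via \eqref{eq:Bprime} and then use $B^2-A^2=Q$ together with the degree-two homogeneity of $Q$; the paper writes this as $zQ'=2Q$ and collapses $f-zf'$ to $f(A-zA')/B$, while you write it as $\tfrac12 cz^2=(B-A)f$, so the difference is only bookkeeping. You also derive the two equivalences explicitly from $B>0$, which the paper leaves as immediate.
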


\begin{proof}
Since $Q=\tfrac12cz^2$ is homogeneous of degree two, $zQ'=2Q$. Using this together with $B^2=A^2+Q$ and \eqref{eq:Bprime},
\[
  f-zf'\;=\;A+B-z\big(A'+B'\big)
  \;=\;\frac{AB+A^2+Q-zA'B-zAA'-\tfrac12 zQ'}{B}
  \;=\;\frac{(A+B)(A-zA')}{B}\;=\;\frac{f\,(A-zA')}{B}.
\]
Dividing by $2f$ and using $\Ray=1-zf'/(2f)=\tfrac12+(f-zf')/(2f)$ with $A=\tfrac12P$ gives \eqref{eq:rayP}.
\end{proof}

\begin{corollary}[The base member satisfies the ray condition]\label{cor:raybase}
At $m=0$, one has $P-zP'\equiv1$, so
\begin{equation}\label{eq:raybase}
  \Ray_{\mathrm{base}}(z)\;=\;\frac12+\frac{1}{4B(z)}\;>\;\frac12
  \qquad\text{for all }z\in\R .
\end{equation}
\end{corollary}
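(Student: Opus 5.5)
The plan is to specialize Proposition~\ref{prop:rayP} to the base member and then read off the sign. Since $c>0$ is assumed throughout this part of the paper, Proposition~\ref{prop:rayP} applies and gives $\Ray=\tfrac12+(P-zP')/(4B)$ at every $z\in\R$. It therefore suffices to compute $P-zP'$ at $m=0$ and to check that $B$ is strictly positive.

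First, at $m=0$ Definition~\ref{def:shape} gives $P(z)=1+sz$, so $P'(z)=s$ and
\[
P(z)-zP'(z)=1+sz-sz=1
\]
identically in $z$. This is the same identity already used in Corollary~\ref{cor:convexity}. Substituting it into \eqref{eq:rayP} yields \eqref{eq:raybase} in the form $\Ray_{\mathrm{base}}=\tfrac12+1/(4B)$.

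Second, for the strict inequality we need $0<B<\infty$ on all of $\R$. By Theorem~\ref{thm:domain}(i) and Corollary~\ref{cor:admissible}, $c>0$ gives $R=A^2+Q>0$ for $z\neq0$, while $R(0)=\tfrac14$. Hence $B=\sqrt R>0$ everywhere, and $B$ is finite as the square root of a finite number. It follows that $1/(4B)>0$, so $\Ray_{\mathrm{base}}>\tfrac12$. Equivalently, this is the criterion stated in Proposition~\ref{prop:rayP}: $P>zP'$ holds because $1>0$.

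There is no real obstacle here. The only point needing care is the positivity of $B$ at $z=0$, where $Q$ vanishes, and this is handled by $A(0)=\tfrac12$. As a sanity check, $\Ray(0)=\tfrac12+\tfrac12=1$, which agrees with \eqref{eq:rayshape} at $z=0$. Also, $B\sim\sqrt\chi|z|$ gives $\Ray\to\tfrac12$ in the wings, consistent with the strict bound not being improvable uniformly.
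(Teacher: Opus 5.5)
Your proposal is correct and is essentially the paper's argument: the corollary is Proposition~\ref{prop:rayP} specialized to $P=1+sz$, where $P-zP'\equiv1$, with $B>0$ on all of $\R$ coming from Theorem~\ref{thm:domain}(i) under the standing assumption $c>0$, exactly as in the proof of Corollary~\ref{cor:convexity}. Your sanity checks, $\Ray(0)=1$ and $\Ray\to\tfrac12$ in the wings, are also correct.
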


Hypothesis (i) of Theorem~\ref{thm:sharedshape} therefore holds automatically at $m=0$. A negative ray factor, like negative curvature (Corollary~\ref{cor:convexity}), can occur only through the amplitudes.

\begin{theorem}[Level sensitivity of total variance]\label{thm:levelsens}
Regard $w$ as a function of the level $\theta$ and $k$ through
$w(\theta,k)=\theta f\big(k/\sqrt{\theta}\big)$. Then at fixed $k$,
\begin{equation}\label{eq:levelsens}
  {\;\frac{\partial w}{\partial\theta}\;=\;f(z)-\frac{z}{2}f'(z)
  \;=\;\frac{2w-k\,w'}{2\theta}\;=\;f(z)\,\Ray(k).\;}
\end{equation}
In particular, since $f>0$ wherever $w>0$,
\begin{equation}\label{eq:levelsenssign}
  \frac{\partial w}{\partial\theta}\;\ge\;0
  \qquad\Longleftrightarrow\qquad \Ray\;\ge\;0 .
\end{equation}
\end{theorem}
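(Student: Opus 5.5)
The plan is a direct chain-rule computation at fixed $k$, holding the shape $f$ (that is, the parameter vector $\bpi=(s,c,a)$ and the basis) fixed while varying only the level. I write $\sca=\sqrt{\theta}$, so $z=k/\sqrt{\theta}$ depends on $\theta$ through $\partial z/\partial\theta=-\tfrac12 k\,\theta^{-3/2}=-z/(2\theta)$. Differentiating $w=\theta f(z)$ by the product and chain rules then gives
\[
\frac{\partial w}{\partial\theta}=f(z)+\theta f'(z)\Big(-\frac{z}{2\theta}\Big)=f(z)-\frac z2 f'(z),
\]
which is the first equality in \eqref{eq:levelsens}. I will note that differentiability of $f$ is guaranteed on the admissible region $c>0$ by Corollary~\ref{cor:admissible}, where $B>0$.

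For the second equality I will invoke \eqref{eq:derivid} with $\sca^2=\theta$. That identity gives $2w=2\theta f$ and $kw'=(\sca z)(\sca f')=\theta zf'$. Hence $(2w-kw')/(2\theta)=f-\tfrac z2f'$.

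For the third equality, I factor out $f$ to get $f-\tfrac z2 f'=f\big(1-zf'/(2f)\big)$. By Proposition~\ref{prop:levelfree}, the bracketed factor is exactly $\Ray$ evaluated at $k=\sca z$. Equivalently, this follows from \eqref{eq:raydef} directly as $(2w-kw')/(2\theta)=(w/\theta)\Ray=f\Ray$.

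The sign statement \eqref{eq:levelsenssign} follows because $f=w/\theta>0$ wherever $w>0$, and by Theorem~\ref{thm:domain}(i) this holds everywhere when $c>0$. So $\partial w/\partial\theta$ and $\Ray$ share a sign.

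There is no genuine obstacle here. The only point needing care is bookkeeping: the partial derivative is taken at fixed $k$, so $z$ moves with $\theta$, whereas along the rays of Definition~\ref{def:rayfactor} it is $z$ that is held fixed. Tracking the factor $\partial z/\partial\theta=-z/(2\theta)$ correctly is what produces the $-\tfrac z2 f'$ term.
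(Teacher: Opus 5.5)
Your proposal is correct and follows the paper's proof essentially step for step. The chain rule with $\partial z/\partial\theta=-z/(2\theta)$ gives the first form, \eqref{eq:derivid} with $k=\sca z$ gives the second, and factoring out $f$ and invoking Proposition~\ref{prop:levelfree} gives the third; your added remarks on the differentiability and positivity of $f$ for $c>0$ match the paper's standing assumption of the admissible region.
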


\begin{proof}
Write $z=k\theta^{-1/2}$, so
$\partial z/\partial\theta=-z/(2\theta)$. Then
$\partial w/\partial\theta=f(z)+\theta f'(z)\big(-z/(2\theta)\big)
=f-\tfrac{z}{2}f'$. The second form follows from \eqref{eq:derivid} and
$k=z\sca$; the third is $f\big(1-\tfrac{zf'}{2f}\big)=f\Ray$ by
Proposition~\ref{prop:levelfree}.
\end{proof}

\begin{remark}[Relation to the density factor]\label{rem:raydensity}
Comparing \eqref{eq:gdef} and \eqref{eq:raydef}, the first bracket of $g$
is exactly $\Ray$:
\[
  g\;=\;\Ray^2-\frac{w'^2}{4}\left(\frac1w+\frac14\right)+\frac{w''}{2}.
\]
Because $g$ carries $\Ray$ \emph{squared}, the condition $g\ge0$ contains no information about the sign of $\Ray$. The ray condition is therefore genuinely independent of the butterfly condition. By \eqref{eq:rayP}, no derivative of $f$ is required to evaluate the ray condition.
\end{remark}

Theorem~\ref{thm:levelsens} depends on the normalization $z=k/\sca$ of \eqref{eq:normstrike}. Under it, $\theta$ enters $w$ only through the prefactor and the argument of $f$, so $\Ray$ carries no dependence on the level.

\subsection{Calendar}\label{sec:calendar}

\begin{definition}[Calendar condition]\label{def:calendarcond}
Let a family of slices be indexed by $T$, each with its own parameter vector. The absence of calendar arbitrage on an interval of $k$ requires
\begin{equation}\label{eq:calendarcond}
  \frac{\partial w(T,k)}{\partial T}\;\ge\;0
\end{equation}
at fixed $k$ when $w$ is differentiable in $T$; equivalently, and without that assumption, $w(T_1,k)\le w(T_2,k)$ for $T_1<T_2$.
\end{definition}
This calendar condition is stated in forward log-moneyness coordinates for expiries of a common underlying.

\begin{theorem}[Calendar decomposition] \label{thm:caldecomp} 
    Let the shape parameters be collected as $\bpi:=(s,c,a)\in\R^{2+m}$, and let $\theta(T)$ and $\bpi(T)$ be differentiable. Then, for a fixed $k$,
    \begin{equation}\label{eq:caldecomp} 
        {\;\frac{\partial w}{\partial T}\;=\; \underbrace{f(z)\,\Ray(k)}_{\partial w/\partial\theta}\;\theta'(T) \;+\;\theta\,\frac{\partial f}{\partial\bpi}\cdot\bpi'(T).\;} 
    \end{equation} 
\end{theorem}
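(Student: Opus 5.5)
The plan is to apply the multivariable chain rule to $w(T,k)=\theta(T)\,f\big(k/\sqrt{\theta(T)};\bpi(T)\big)$ at fixed $k$. Here the shape $f(z;\bpi)$ is regarded as a function of the normalized strike and the shape parameters jointly. The first step is to make that dependence explicit. By Definition~\ref{def:shape}, $f=A+B$, where $P$ is affine in $(s,a)$, $Q$ is linear in $c$, and $B=\sqrt{P^2/4+cz^2/2}$. On the admissible region $c>0$, Corollary~\ref{cor:admissible} gives $B>0$ on all of $\R$. Hence $(z,\bpi)\mapsto f(z;\bpi)$ is jointly $C^1$ there: it is a composition of smooth maps with a square root evaluated away from zero, and the basis members $\varphi_j$ are smooth by Proposition~\ref{prop:primderiv}. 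Together with the assumed differentiability of $\theta(T)>0$ and $\bpi(T)$, this justifies the chain rule. The only point I would flag is that $c(T)$ must stay in $c>0$, which I take as part of admissibility.

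Next, I would differentiate, treating $T$ as entering through two channels: the level $\theta(T)$ and the shape vector $\bpi(T)$. Write $W(\theta,\bpi):=\theta f(k/\sqrt\theta;\bpi)$. Then
\[
\frac{\partial w}{\partial T}=\frac{\partial W}{\partial\theta}\,\theta'(T)+\frac{\partial W}{\partial\bpi}\cdot\bpi'(T).
\]
The first partial is taken at fixed $\bpi$ and fixed $k$, which is exactly the setting of Theorem~\ref{thm:levelsens}. That theorem gives $\partial W/\partial\theta=f(z)-\tfrac z2 f'(z)=f(z)\,\Ray(k)$, where Proposition~\ref{prop:levelfree} identifies the bracket with the ray factor. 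The second partial is taken at fixed $\theta$ and fixed $k$, so $z=k/\sqrt\theta$ is also fixed. The prefactor $\theta$ then passes through, leaving $\theta\,\partial f/\partial\bpi$ evaluated at $z$.

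Adding the two contributions gives \eqref{eq:caldecomp}. I would optionally record the components of $\partial f/\partial\bpi$ to make the shape-drift term concrete:
\[
\partial_s f=\tfrac{f}{2B}\,z,\qquad \partial_{a_j} f=\tfrac{f}{2B}\,\varphi_j(z/\lambda_j),\qquad \partial_c f=\tfrac{z^2}{4B}.
\]
These follow from $\partial_x f=\tfrac12\partial_xP\,(1+A/B)+\partial_xQ/(2B)$ and $1+A/B=f/B$.

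The computation is routine, and I do not expect a real obstacle. The only care needed is bookkeeping: ensuring that the partial in $\theta$ is taken with $\bpi$ held fixed, so that Theorem~\ref{thm:levelsens} applies verbatim. It must not be confused with a total derivative, since $z$ moves with $\theta$ but not with $\bpi$. The other point is the regularity justification from $c>0$ described in the first step.
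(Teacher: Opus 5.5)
Your proposal is correct and follows essentially the same route as the paper: the chain rule on $w=\theta f(k/\sqrt{\theta};\bpi)$ through the two channels $\theta$ and $\bpi$, with Theorem~\ref{thm:levelsens} supplying the level term. Your added regularity justification from $c>0$ (via $B>0$) and the explicit components of $\partial f/\partial\bpi$ are consistent with the paper's standing admissibility assumption and with the gradient used later in the proof of Theorem~\ref{thm:calbudget}.
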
 
\begin{proof} 
    $w=\theta f(k/\sqrt{\theta}\,;\bpi)$ depends on $T$ through $\theta$ and $\bpi$; apply the chain rule and Theorem~\ref{thm:levelsens} to the first term. 
\end{proof}

Equation~\eqref{eq:caldecomp} separates calendar variation into a level contribution and a shape-drift contribution. Calendar monotonicity can fail if the level decreases, if the ray factor is negative, or if shape drift overwhelms a non-negative level term. The first two scenarios are pointwise properties that may be imposed exactly, while the third is not.

\begin{theorem}[Shared shape]\label{thm:sharedshape} 
Let a set of expiries $T_1<\dots<T_N$ of one underlying carry a common shape parameter vector $\bpi$ with $c>0$, so that $\bpi'\equiv0$, with levels $\theta_1,\dots,\theta_N$, and let $I\subseteq\R$ be an interval of log-moneyness. Suppose 
\begin{itemize}\itemsep1pt 
    \item[(i)] $\Ray_f(z):=1-\dfrac{zf'(z)}{2f(z)}\ge0$ for every \[ z=\frac{k}{\sqrt{\theta}}, \qquad k\in I,\quad \theta\in[\theta_1,\theta_N], \] and 
    \item[(ii)] $\theta_1\le\theta_2\le\dots\le\theta_N$. 
\end{itemize} 
Then $w(T_i,k)\le w(T_{i+1},k)$ for all $k\in I$ and all $i$. In particular, the slices are calendar-arbitrage-free on $I$. \end{theorem}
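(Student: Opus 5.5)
Fix $k\in I$ and consider the one-variable function $h(\theta):=w(\theta,k)=\theta f(k/\sqrt{\theta})$ on $[\theta_1,\theta_N]$, with the shape $f$ held fixed because $\bpi$ is common to all expiries. The plan is to show that $h$ is non-decreasing on this interval. Then hypothesis (ii) gives $w(T_i,k)=h(\theta_i)\le h(\theta_{i+1})=w(T_{i+1},k)$ for each $i$. This avoids any differentiability of $\theta$ in $T$, which the statement does not assume. It also matches the second, derivative-free form of the calendar condition in Definition~\ref{def:calendarcond}.

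The key steps run as follows. First, since $c>0$, Corollary~\ref{cor:admissible} gives that $f$ is smooth and strictly positive on all of $\R$. Hence $h$ is $C^1$ on $(0,\infty)$, and in particular on $[\theta_1,\theta_N]$. If $\theta_1=\theta_N$ there is nothing to prove, because all the levels coincide and the inequalities hold with equality. Second, Theorem~\ref{thm:levelsens} gives
\[
h'(\theta)=\frac{\partial w}{\partial\theta}(\theta,k)=f(z)\,\Ray_f(z),\qquad z=\frac{k}{\sqrt{\theta}} .
\]
For every $\theta\in[\theta_1,\theta_N]$ the point $z=k/\sqrt{\theta}$ lies in the set where hypothesis (i) applies, so $\Ray_f(z)\ge0$. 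Combined with $f(z)>0$, this gives $h'(\theta)\ge0$ throughout the interval. Third, the mean value theorem (or integrating $h'$ from $\theta_i$ to $\theta_{i+1}$) yields $h(\theta_{i+1})-h(\theta_i)\ge0$ whenever $\theta_i\le\theta_{i+1}$, which concludes the argument.

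The only point needing care is that the ray condition must hold along the whole ray segment $\{k/\sqrt{\theta}:\theta\in[\theta_i,\theta_{i+1}]\}$, not merely at its endpoints. This is why hypothesis (i) is quantified over the full interval $[\theta_1,\theta_N]$ rather than only over the grid of levels. I expect no real obstacle beyond checking that this quantification is what the integration step uses. One could also note that by Proposition~\ref{prop:rayP} the hypothesis can be checked without derivatives of $f$, but this is not needed for the proof.
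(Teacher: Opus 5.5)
Your proposal is correct and takes essentially the same approach as the paper. Both fix $k$ and treat $w(\theta,k)=\theta f(k/\sqrt{\theta})$ as a single function of $\theta$. Both then use Theorem~\ref{thm:levelsens} together with hypothesis (i) to get $\partial w/\partial\theta=f\,\Ray_f\ge0$ on $[\theta_i,\theta_{i+1}]$, and integrate, with hypothesis (ii) fixing the orientation of the interval.
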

\begin{proof} 
    Fix $k\in I$ and index $i$. With $\bpi$ common, \[ w(\theta,k)=\theta f\!\left(\frac{k}{\sqrt{\theta}}\right) \] is a single function of $\theta$. For every $\theta\in[\theta_i,\theta_{i+1}]$, hypothesis (i) gives $\Ray_f(k/\sqrt{\theta})\ge0$. By Theorem~\ref{thm:levelsens}, \[ \frac{\partial w}{\partial\theta}=f\,\Ray_f\ge0, \] since $f>0$ on the admissible region. Therefore \[ w(T_{i+1},k)-w(T_i,k)\;=\;\int_{\theta_i}^{\theta_{i+1}}\frac{\partial w}{\partial\theta}\,\mathrm{d}\theta\;\ge\;0, \] with hypothesis (ii) determining the orientation of the interval. 
\end{proof}
The hypothesis $\bpi'=0$ is a real restriction. A shape common in $z$ places its features at $k=\pm\lambda_j\sca$, so they move with $\sqrt{\theta}$ across expiries; a feature produced by a scheduled event instead sits at a log-moneyness set by the expected move, on both expiries straddling the announcement, and does not scale that way. Theorem~\ref{thm:sharedshape} therefore applies to constructions that impose a common shape. Where the shape varies with maturity, the second term of \eqref{eq:caldecomp} remains, and Theorem~\ref{thm:calbudget} below bounds it.

The guarantee is also one-sided: calendar-freeness is preserved exactly, while the butterfly condition remains level-dependent. By \eqref{eq:gaffine}, one check suffices. If $g\ge0$ at the largest level $\theta_N$ for every $z$ appearing in hypothesis (i), then $g\ge0$ there at every smaller level, and the slices carry no butterfly arbitrage on $I$.

When the shape itself drifts, the remaining term of \eqref{eq:caldecomp} is $\theta\,\partial f/\partial\bpi\cdot\bpi'$. The following theorem bounds it through elementary estimates on the parameter gradient of the shape. The resulting condition is a budget, analogous to the curvature budget of Corollary~\ref{cor:at-the-forwardbudget}: the level must grow fast enough to absorb the worst-case drift of the shape.

\begin{theorem}[Calendar budget for a drifting shape]\label{thm:calbudget}
Let $\theta$ and $\bpi=(s,c,a)$ be continuously differentiable on $[T_1,T_2]$, with $\theta>0$ and $c>0$, and let $I\subseteq\R$ be an interval of log-moneyness. Suppose that for every $T\in[T_1,T_2]$ and every $k\in I$, writing $z=k/\sqrt{\theta(T)}$,
\begin{equation}\label{eq:calbudget}
  \theta'(T)\,f(z)\,\Ray(z)
  \;\ge\;
  \theta(T)\left[\,|z|\,|s'(T)|
  \;+\;\sum_{j=1}^{m}\left|\varphi_j\!\left(\frac{z}{\lambda_j}\right)\right|\,|a_j'(T)|
  \;+\;\frac{|z|\,|c'(T)|}{2\sqrt{2\,c(T)}}\,\right].
\end{equation}
Then $\partial w/\partial T\ge0$ for every $k\in I$, so the slices carry no calendar arbitrage on $I$.
\end{theorem}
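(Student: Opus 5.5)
The proof runs through the calendar decomposition of Theorem~\ref{thm:caldecomp}. At fixed $k$,
\[
\frac{\partial w}{\partial T}=f(z)\Ray(z)\,\theta'(T)+\theta\Big(\partial_s f\,s'+\partial_c f\,c'+\sum_j\partial_{a_j}f\,a_j'\Big),
\]
with $\Ray$ level-free by Proposition~\ref{prop:levelfree}. It therefore suffices to show
\[
\Big|\frac{\partial f}{\partial\bpi}\cdot\bpi'\Big|\le |z||s'|+\sum_j|\varphi_j(z/\lambda_j)||a_j'|+\frac{|z||c'|}{2\sqrt{2c}}.
\]
Hypothesis \eqref{eq:calbudget} then gives $\partial w/\partial T\ge \theta(\text{bound})-\theta|\text{drift}|\ge0$. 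Continuous differentiability of $\theta$ and $\bpi$, together with smoothness of $f$ on $c>0$ (Corollary~\ref{cor:admissible}), justifies the chain rule.

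The parameter gradient is computed from $f=A+B$ with $B=\sqrt{P^2/4+cz^2/2}$. The parameters $s$ and $a_j$ enter only through $P$, and
\[
\frac{\partial f}{\partial P}=\frac12+\frac{P}{4B}=\frac{f}{2B}.
\]
This factor lies in $[0,1]$, because $B\ge|A|=|P|/2$ when $c>0$ by Theorem~\ref{thm:domain}(i). Hence
\[
|\partial_s f|=\frac{f}{2B}|z|\le|z|,\qquad |\partial_{a_j}f|=\frac{f}{2B}|\varphi_j(z/\lambda_j)|\le|\varphi_j(z/\lambda_j)|.
\]

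For $c$, differentiation gives $\partial_c f=z^2/(4B)$. The step that needs care is bounding this by $|z|/(2\sqrt{2c})$. I use $B\ge\sqrt{Q}=|z|\sqrt{c/2}$, which gives
\[
\frac{z^2}{4B}\le\frac{z^2}{4|z|\sqrt{c/2}}=\frac{|z|}{2\sqrt{2c}}.
\]
This is exactly the coefficient in \eqref{eq:calbudget}. At $z=0$ both sides vanish, so the case is trivial, and $B>0$ everywhere ensures no division issues.

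Combining the three bounds with the triangle inequality controls the drift term. The left side of \eqref{eq:calbudget} is exactly the level term of \eqref{eq:caldecomp}, so $\partial w/\partial T\ge0$ pointwise on $[T_1,T_2]\times I$. Integrating in $T$ at fixed $k$ gives $w(T_1,k)\le w(T_2,k)$, which is Definition~\ref{def:calendarcond}.

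I expect no real obstacle. The only points to watch are the sign-free bound $f/(2B)\le1$, which requires $f\le 2B$, i.e.\ $A\le B$, and the $c$-derivative bound. One should also note that the hypothesis implicitly requires the left side of \eqref{eq:calbudget} to be non-negative.
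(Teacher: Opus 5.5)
Your proposal is correct and follows the paper's proof essentially step for step. Both run the chain rule through the calendar decomposition and use $\partial f/\partial P=f/(2B)\in[0,1]$ from $B\ge|A|$, then $\partial f/\partial c=z^2/(4B)\le|z|/(2\sqrt{2c})$ from $B\ge\sqrt{c/2}\,|z|$, and the triangle inequality; your extra remarks (joint smoothness justifying the chain rule, the $z=0$ case, integrating in $T$) are harmless details the paper leaves implicit.
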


\begin{proof}
The parameters $\bpi$ enter \eqref{eq:fdef} through $P$ and $Q$ only, and
\[
  \frac{\partial f}{\partial P}=\frac12+\frac{A}{2B}=\frac{f}{2B},
  \qquad
  \frac{\partial f}{\partial Q}=\frac{1}{2B},
\]
while $\partial P/\partial s=z$, $\partial P/\partial a_j=\varphi_j(z/\lambda_j)$ and $\partial Q/\partial c=\tfrac12 z^2$ by \eqref{eq:Pdef}--\eqref{eq:Qdef}. Hence
\[
  \frac{\partial f}{\partial\bpi}\cdot\bpi'
  \;=\;\frac{f}{2B}\left[z\,s'+\sum_{j}\varphi_j\!\left(\frac{z}{\lambda_j}\right)a_j'\right]
  \;+\;\frac{z^2c'}{4B}.
\]
By Theorem~\ref{thm:domain}(i), $c>0$ gives $B\ge|A|$, so $f/(2B)=\tfrac12+A/(2B)\in[0,1]$; and $B\ge\sqrt{c/2}\,|z|$ gives $z^2/(4B)\le|z|/(2\sqrt{2c})$. The modulus of the display is therefore, by the triangle inequality, at most the bracket in \eqref{eq:calbudget}, and substituting into \eqref{eq:caldecomp} gives $\partial w/\partial T\ge0$.
\end{proof}

Condition \eqref{eq:calbudget} is sufficient, not necessary. The left side is the calendar contribution of the level term; the bracket bounds that of the drifting shape. Where $\theta'>0$, the right-hand side is non-negative and $f>0$, so the condition forces $\Ray\ge0$, and hypothesis (i) of Theorem~\ref{thm:sharedshape} need not be assumed separately. Setting $\bpi'\equiv0$ empties the bracket and returns that theorem along a differentiable level path.

The two theorems are otherwise not comparable. Theorem~\ref{thm:sharedshape} compares a finite set of expiries directly and needs no differentiability in $T$; Theorem~\ref{thm:calbudget} reads $\theta'$ and $\bpi'$, and so applies to an interpolated surface between quoted expiries.

Finally, the corresponding \emph{local variance} is 
\begin{equation}\label{eq:localvar} 
    v(T,k)\;:=\;\frac{\partial w/\partial T}{g(k)}. 
\end{equation} 
This expression is admissible when $\partial_Tw\ge0$ and $g>0$. If $g=0$, it is singular; if $g<0$, the implied density is negative, and the resulting quantity no longer defines an admissible local variance.

\section{Conclusion}\label{sec:conclusion}

WSVI extends the per-slice eSSVI form by a basis of bounded one-sided terms, at the cost of one parameter per term, with both the domain and the wing asymptotics unchanged by the extension. In return, the family can express shapes that the three-parameter slice cannot, strictly within the domain. The wing-slope condition \eqref{eq:wingdiag} survives the extension intact (Proposition~\ref{prop:wingallm}), for the same reason the wings themselves do.

The price of remaining in volatility space is that the absence of butterfly arbitrage is a condition to be imposed rather than a property inherited from the construction. Two questions are left open. The first is an exact parameter-domain butterfly characterization for $m>0$, analogous to those known for the base member. Corollary~\ref{cor:at-the-forwardbudget} gives only the at-the-forward slice of such a characterization. The second is empirical: whether market smiles require additional flexibility and how often. The calibration of the family is treated in a separate forthcoming manuscript by the authors.

\section*{Acknowledgments}

Charles Clevenger and Xiang Wan were partially supported by the National Science Foundation under Grant DMS-2418701. 
Charles Clevenger was also partially supported by the Loyola University Chicago Mulcahy Fellowship.

\bibliographystyle{plain}
\bibliography{wsvi}

\end{document}